\newcommand{\mb}{\mathbf}
\newcommand{\wh}{\widehat}
\newcommand{\N}{\mathbb{N}}
\newcommand{\Z}{\mathbb{Z}}
\newcommand{\Q}{\mathbb{Q}}
\newcommand{\R}{\mathbb{R}}
\newcommand{\Ref}{\mathrm{Ref}}
\newcommand{\Rot}{\mbox{Rot}}
\newcommand{\MartinLof}{Martin-L\"{o}f}
\newtheorem{thm}{Theorem}
\newtheorem{coro}[thm]{Corollary}
\newtheorem{obser}[thm]{Observation}
\newtheorem{lemm}[thm]{Lemma}
\newtheorem{prop}[thm]{Proposition}
\begin{document}
\title{\bf{Lines Missing Every Random Point}\footnote{A preliminary version of a portion of this work was presented at the Conference on Computability in Europe (Budapest, Hungary, June 23-27, 2014).}}
\author{
Jack H. Lutz\footnote{Department of Computer Science, Iowa State University, Ames, IA 50011, USA. {\tt lutz@cs.iastate.edu}. This author's research was supported in part by National Science Foundation Grant 1247051.}
\and
Neil Lutz\footnote{Department of Computer Science, Rutgers University, Piscataway, NJ 08854, USA. {\tt njlutz@cs.rutgers.edu}. This author's research was supported in part by National Science Foundation Grant 1101690.}
}

\date{}

\maketitle

\begin{abstract}
We prove that there is, in every direction in Euclidean space, a line that misses every computably random point. We also prove that there exist, in every direction in Euclidean space, arbitrarily long line segments missing every double exponential time random point.
\end{abstract}

\begin{section}{Introduction}\label{se:1}
One objective of the theory of computing is to investigate the fine-scale geometry of algorithmic information in Euclidean space. Recent work along these lines has included algorithmic classifications of points lying on computable curves and arcs~\cite{GuLuMa06, RetZhe09, CoDaMc12, ZheRet12, McNi13} and in more exotic sets~\cite{LutMay08, KjoNer09, DLMT14, GLMM15}.

This paper concerns a simple, fundamental question: Can the direction of a line in Euclidean space force the line to meet at least one random point? That is, can the set of \MartinLof{} random points, which is everywhere dense and contains almost every point in Euclidean space, be avoided by lines in every direction? For example, it is reasonable to conjecture that every line of random slope in $\R^2$ contains a random point. We show here that this conjecture is false, and in fact that---regardless of slope---every line can be translated so that it contains no \MartinLof{} random point. Moreover, the line can miss the larger class of all computably random points.

Our solution of this problem builds on a very old---and ongoing---line of research in geometric measure theory. In 1917 Fujiwara and Kakeya~\cite{Kake17,FujKak17} posed the question of the minimum area of a plane set in which a unit segment can be continuously reversed without leaving the set, a \emph{Kakeya needle set}. This question was resolved in 1928 by Besicovitch~\cite{Besi28b}: such a set can have arbitrarily small measure. The work made use of a construction by Besicovitch from 1919~\cite{Besi19} (but not widely circulated until its republication in 1928~\cite{Besi28a}) of a plane set of area $0$ containing a unit line segment in every direction, a \emph{Kakeya set}. This set was constructed using a clever iterated process of partitioning and translating the pieces of an equilateral triangle.

In 1964 Besicovitch used a duality principle to construct a plane set with area 0 that contains a line in every direction, a \emph{Besicovitch set}~\cite{Besi64}. Falconer~\cite{Falc80,Falc85} used an alternative duality principle to give a somewhat simpler construction of a Besicovitch set. This latter set $B$, which is the point-line dual of a simply defined ``fractal dust,'' is described in detail in Section \ref{se:4}. Our main result is achieved by showing that $B$ has computable measure $0$, as does its Cartesian product with $\R^n$, for every $n\in\N$. We also sketch an alternative proof suggested to us by Turetsky (personal communication) and an anonymous reviewer.

Our main result leads us to conjecture that there is, in every direction in Euclidean space, a line that misses not only every computably random point, but every point that is feasibly random (i.e., polynomial time random, as defined in Section \ref{se:2}). We are unable to prove this conjecture at this time, but in Section \ref{se:5} we prove a weaker result along these lines. Specifically, we show that there exist, in every direction in the Euclidean plane, arbitrarily long line segments missing every point that is double exponential time random (a randomness condition defined in Section \ref{se:2}). Our proof of this fact uses Besicovitch's above-mentioned 1919 construction of a Kakeya set, together with later refinements of this proof by Perron~\cite{Perr28}, Schoenberg~\cite{Scho62}, and Falconer~\cite{Falc85}.

More recent work on the ``sizes'' of Besicovitch sets and Kakeya sets has focused on their dimensions. Davies showed that every Kakeya set in $\R^2$ has Hausdorff dimension $2$~\cite{Davi71}, and the famous Kakeya conjecture states that Kakeya sets in $\R^n$ have Hausdorff dimension $n$ for all $n\geq2$. For more on this history, consult~\cite{Falc85, KatTao02}.

The remainder of the paper is organized as follows. Section \ref{se:2} contains preliminary information regarding computable and time-bounded measure and randomness in $\R^n$. In Section \ref{se:3}, we present a class of martingales for betting on open sets. In Section \ref{se:4}, we describe Falconer's Besicovitch set $B$ and prove the main theorem in $\R^2$. In Section \ref{se:5} we describe a Kakeya set $K$ and use it to prove our result on segments missing every double exponential time random point in $\R^2$. Section \ref{se:6} extends our two theorems to $\R^n$ ($n\geq2$). Section \ref{se:7} mentions open problems.
\end{section}

\begin{section}{Computable and Time-Bounded Randomness in $\R^n$}\label{se:2}
We now discuss the elements of computable measure and randomness in $\R^n$. For each $r\in\N$ and each $\mb{u}=\left(u_1,...,u_n\right)\in\Z^n$, let
\[Q_r\left(\mb{u}\right)=\left[u_1\cdot2^{-r},\left(u_1+1\right)\cdot2^{-r}\right)\times...\times\left[u_n\cdot2^{-r},\left(u_n+1\right)\cdot2^{-r}\right)\]
be the $r$-\emph{dyadic cube} at $\mb{u}$. Note that each $Q_r\left(\mb{u}\right)$ is ``half-open, half-closed'' in such a way that, for each $r\in\N$, the family
\[\mathcal{Q}_r=\bigl\{Q_r\left(\mb{u}\right)\;\big|\;\mb{u}\in\{0,...,2^r-1\}^n\bigr\}\]
is a partition of the unit cube $Q_0\left(\mb{0}\right)=\left[0,1\right)^n$. The family
\[\mathcal{Q}=\bigcup_{r=0}^\infty\mathcal{Q}_r\]
is the set of all \emph{dyadic cubes} in $\left[0,1\right)^n$.

A \emph{martingale} on $\left[0,1\right)^n$ is a function $d:\mathcal{Q}\to\left[0,\infty\right)$ satisfying
\begin{equation}\label{eq:1}
d\left(Q_r\left(\mb{u}\right)\right)=2^{-n}\sum_{\mb{a}\in\{0,1\}^n}d\left(Q_{r+1}\left(2\mb{u}+\mb{a}\right)\right)
\end{equation}
for all $Q_r\left(\mb{u}\right)\in\mathcal{Q}$. Intuitively, a martingale $d$ is a strategy for placing successive bets on the location of a point $\mb{x}\in\left[0,1\right)^n$. After $r$ bets have been placed, the bettor's capital is
\[d^{\left(r\right)}\left(\mb{x}\right)=d\left(Q_r\left(\mb{u}\right)\right)\;,\]
where $\mb{u}$ us the unique element of $\{0,...,2^r-1\}^n$ such that $\mb{x}\in Q_r\left(\mb{u}\right)$. The bettor's next bet is on which of the $2^n$ immediate subcubes $Q_{r+1}\left(2\mb{u}+\mb{a}\right)$ of $Q_r\left(\mb{u}\right)$ has $\mb{x}$ as an element. The condition \eqref{eq:1} says that the bettor's expected capital after this bet is exactly the bettor's capital before the bet, i.e., the payoffs are fair. A martingale $d$ \emph{succeeds} at a point $\mb{x}\in\left[0,1\right)^n$ if
\[\limsup_{r\to\infty}d^{\left(r\right)}\left(\mb{x}\right)=\infty\;.\]
A well known theorem of Ville~\cite{Vill39}, restated in the present setting, says that a set $E\subseteq\left[0,1\right)^n$ has Lebesgue measure $m(E)=0$ if and only if there is a martingale $d$ that succeeds at every point $\mb{x}\in E$. It follows easily by the countable additivity and translation invariance of Lebesgue measure that a set $E\subseteq\R^n$ has Lebesgue measure $0$ if and only if there is a martingale $d$ that succeeds at every point $\mb{x}\in E^\#$, where
\begin{equation}\label{eq:2}
E^\#=\left[0,1\right)^n\cap\bigcup_{\mb{t}\in\Z^n}\left(E+\mb{t}\right)\;.
\end{equation}

Let
\[J=\bigl\{\left(r,\mb{u}\right)\in\N\times\Z^n\;\big|\;\mb{u}\in\{0,...,2^r-1\}^n\bigr\}\;.\]
Then a martingale $d:\mathcal{Q}\to\left[0,\infty\right)$ is \emph{computable} if there is a computable function $\widehat{d}:\N\times J\to\Q\cap\left[0,\infty\right)$ such that, for all $\left(s,r,\mb{u}\right)\in\N\times J$,
\begin{equation}\label{eq:3}
\left|\widehat{d}\left(s,r,\mb{u}\right)-d\left(Q_r\left(\mb{u}\right)\right)\right|\leq2^{-s}\;.
\end{equation}
A set $E\subseteq\R^n$ is defined to have \emph{computable measure} $0$ if there is a computable martingale $d$ that succeeds at every point $\mb{x}\in E^\#$, where $E^\#$ is defined as in \eqref{eq:2}. A point $\mb{x}\in\R^n$ is \emph{computably random} if it is not an element of any set of computable measure $0$, i.e., if there is no computable martingale that succeeds at $\mb{x}$. Computable randomness was introduced by Schnorr~\cite{Schn71a,Schn71b}. It is well known~\cite{Nies09, DowHir10} that every random point in $\R^n$ (i.e., every \MartinLof{} random point in $\R^n$) is computably random and that the converse does not hold. In particular, then, almost every point in $\R^n$ is computably random.

Resource-bounded measure, a complexity-theoretic generalization of Lebesgue measure that induces measure on complexity classes, has been used to define complexity-theoretic notions of randomness~\cite{Lutz92}. Adapting these notions to Euclidean space, a martingale $d:\mathcal{Q}\to[0,\infty)$ is p-\emph{computable} (respectively, ee-\emph{computable}) if there is a function $\widehat{d}:\N\times J\to\Q\cap[0,\infty)$ that satisfies (\ref{eq:3}) and is computable in $(s+r)^{O(1)}$ time (respectively, in $2^{2^{O(s+r)}}$ time). A point $\mb{x}\in\R^n$ is p-\emph{random} (or \emph{polynomial time random}, or \emph{feasibly random}) if no p-computable martingale succeeds at $\mb{x}$~\cite{Lutz92}. A point $\mb{x}\in\R^n$ is ee-\emph{random} (or \emph{double exponential time random}) if no ee-computable martingale succeeds at $\mb{x}$~\cite{HarHit07}. It is routine to show that every computably random point is ee-random, that every ee-random point is p-random, and that the converses of these statements are false.
\end{section} 

\begin{section}{Betting on Open Sets}\label{se:3}
In this section we describe a class of martingales that are used in the proof of the main theorem in Section \ref{se:4}. These martingales are also likely to be useful in future investigations.

For any set $G\subseteq\left[0,1\right)^n$ with $m\left(G\right)>0$, define a martingale $d_G:\mathcal{Q}\to\left[0,\infty\right)$ recursively as follows.
\begin{enumerate}[(i)]
\item $d_G\left(Q_0\left(\mb{0}\right)\right)=1.$
\item For all $r\geq0$, $\mb{u}\in\{0,...,2^r-1\}^n$, and $\mb{a}\in\{0,1\}^n$
\end{enumerate}
\[d_G\left(Q_{r+1}\left(2\mb{u}+\mb{a}\right)\right)=\left\{\begin{array}{ll}
0&\mbox{ if }d_G\left(Q_r\left(\mb{u}\right)\right)=0\\
2^nd_G\left(Q_r\left(\mb{u}\right)\right)\frac{m\left(G\cap Q_{r+1}\left(2\mb{u}+\mb{a}\right)\right)}{m\left(G\cap Q_r\left(\mb{u}\right)\right)}&\mbox{ otherwise}\;.
\end{array}\right.\]
That is, for each cube $Q\in\mathcal{Q}_r$, the values of the martingale on the immediate subcubes of $Q$ are proportional to the measures of the subcubes' intersections with $G$. 
\begin{thm}\label{thm:1}
For every nonempty set $G$ that is open as a subset of the subspace $\left[0,1\right)^n$ of $\R^n$ and every $\mb{x}\in G$, $d^{\left(r\right)}_G\left(\mb{x}\right)=1/m\left(G\right)$ for all sufficiently large $r$.
\end{thm}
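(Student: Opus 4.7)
The plan is to identify a clean closed-form expression for $d_G(Q_r(\mathbf{u}))$ and then use the openness of $G$ to show that, once the dyadic cube containing $\mathbf{x}$ is small enough, it lies entirely inside $G$ and the expression collapses to $1/m(G)$.

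First I would verify by induction on $r$ that, for every $Q_r(\mathbf{u}) \in \mathcal{Q}$,
\[
d_G\bigl(Q_r(\mathbf{u})\bigr) \;=\; \frac{m\bigl(G \cap Q_r(\mathbf{u})\bigr)}{m\bigl(Q_r(\mathbf{u})\bigr)\, m(G)}\,.
\]
The base case $r=0$ is the normalization $d_G(Q_0(\mathbf{0})) = 1$, using $G \subseteq [0,1)^n$ and $m(Q_0(\mathbf{0})) = 1$. For the inductive step, if $d_G(Q_r(\mathbf{u})) > 0$ then $m(G \cap Q_r(\mathbf{u})) > 0$, so substituting the inductive hypothesis into the recursive definition and using $m(Q_{r+1}(2\mathbf{u}+\mathbf{a})) = 2^{-n}\, m(Q_r(\mathbf{u}))$ yields the stated formula at level $r+1$; if $d_G(Q_r(\mathbf{u})) = 0$, then $m(G \cap Q_r(\mathbf{u})) = 0$ and both sides are $0$ for every subcube. (Along the way this also confirms the martingale identity \eqref{eq:1}, since the summed subcube measures reconstitute $m(G \cap Q_r(\mathbf{u}))$.)

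Next, fix $\mathbf{x} \in G$. Since $G$ is open in $[0,1)^n$, there exists $\varepsilon > 0$ such that every point of $[0,1)^n$ within Euclidean distance $\varepsilon$ of $\mathbf{x}$ belongs to $G$. Crucially, because each $Q_r(\mathbf{u})$ is of the half-open form defined in Section~\ref{se:2}, the unique cube $Q_r(\mathbf{u}_r) \in \mathcal{Q}_r$ containing $\mathbf{x}$ is contained in $[0,1)^n$ and has diameter $\sqrt{n}\,2^{-r}$. Choosing $r$ large enough that $\sqrt{n}\,2^{-r} < \varepsilon$, we get $Q_r(\mathbf{u}_r) \subseteq G$, hence $m(G \cap Q_r(\mathbf{u}_r)) = m(Q_r(\mathbf{u}_r))$. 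Plugging into the formula from the first step gives $d_G^{(r)}(\mathbf{x}) = 1/m(G)$ for all such $r$, completing the proof.

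The only subtlety is the interplay between the subspace topology on $[0,1)^n$ and the half-open dyadic cubes: one must check that the cube containing $\mathbf{x}$ is always inside $[0,1)^n$ (so the ``missing'' faces at coordinates $1$ are irrelevant), which is exactly what the asymmetric half-open convention in the definition of $Q_r(\mathbf{u})$ guarantees. Everything else is a routine induction.
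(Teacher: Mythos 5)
Your proposal is correct and follows essentially the same route as the paper's proof: establish the closed-form $d_G(Q_r(\mb{u}))=2^{nr}\,m(G\cap Q_r(\mb{u}))/m(G)$ by induction, then use openness of $G$ to get a dyadic cube around $\mb{x}$ contained in $G$, collapsing the formula to $1/m(G)$. Your version is in fact slightly more careful than the paper's on the final step, since you account for the cube's diameter being $\sqrt{n}\,2^{-r}$ rather than $2^{-r}$ when choosing $r$.
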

\begin{proof}
Let $G$ be a nonempty open set in the subspace $[0,1)^n$ of $\R^n$. Then $m(G)>0$, and by a routine induction argument, for any $r\in\N$ and $\mb{u}\in\{0,...,2^r-1\}^n$,
\begin{equation}\label{eq:closeddg}
d_G(Q_r(\mb{u}))=2^{nr}\frac{m(G\cap Q_r(\mb{u}))}{m(G)}.
\end{equation}
For any $\mb{x}\in G$ there exists $\varepsilon>0$ such that $\mathcal{B}_\varepsilon(\mb{x})\subseteq G\cap[0,1)^n$. Let $r>-\log(\varepsilon)$ and $Q\in\mathcal{Q}_r$ such that $\mb{x}\in Q$. Then $d^{(r)}_G(\mb{x})=d_G(Q)$, and $2^{-r}<\varepsilon$, so $Q\subseteq\mathcal{B}_\varepsilon(\mb{x})\subseteq E$. Applying \eqref{eq:closeddg},
\[d^{(r)}_G=2^{nr}\frac{m(G\cap Q)}{m(G)}=2^{nr}\frac{m(Q)}{m(G)}=\frac{1}{m(G)}.\]
\end{proof}
When $G$ is open, we call $d_G$ the \emph{open set martingale} for $G$.
\end{section} 

\begin{section}{Betting on a Besicovitch Set}\label{se:4}
This section reviews Falconer's construction of the Besicovitch set $B$ mentioned in the introduction and proves that the set $B$ in fact has computable measure $0$. Hence $B$ contains a line in every direction in $\R^2$, and each of these lines misses every computably random point in $\R^2$.

For each $m,b\in\R$, let $\mathcal{L}_{m,b}\subseteq\R^2$ be the line with slope $m$ and $y$-intercept $b$. Falconer defined the \emph{line set operator} $\mathcal{L}:\mathcal{P}\left(\R^2\right)\to\mathcal{P}\left(\R^2\right)$
by
\[\mathcal{L}\left(F\right)=\bigcup\left\{\mathcal{L}_{m,b}\;|\;\left(m,b\right)\in F\right\}\]
for all $F\subseteq\R^2$. We call $\mathcal{L}\left(F\right)$ the \emph{line set} of $F$. It is easy to verify that the operator $\mathcal{L}$ is monotone and maps compact sets to closed sets.

We are interested in the line set of a particular self-similar fractal $F$, which we now define. Consider the alphabet $\Sigma=\{0,1,2,3\}$. For each $i\in\Sigma$ define the contraction $S_i:\R^2\to\R^2$ by
\[S_i\left(x,y\right)=\frac{1}{4}\left(\left(x,y\right)+\left(i,a_i\right)\right)\;,\]
where $a_0=2$, $a_1=0$, $a_2=3$, and $a_3=1$. For each $w\in\Sigma^*$ define the set $F\left(w\right)\subseteq\R^2$ by the recursion
\begin{align*}
F\left(\lambda\right)&=\left[0,1\right]^2,\\
F\left(iw\right)&=S_i\left(F\left(w\right)\right)
\end{align*}
for all $i\in\Sigma$ and $w\in\Sigma^*$. For each $k\in\N$ let
\[F_k=\bigcup\left\{F\left(w\right)\;|\;w\in\Sigma^k\right\}\;.\]

The sets $F_0$ and $F_1$, along with their line sets, are depicted in Figure \ref{fig:1}. We are interested in the set
\[F=\bigcap_{k=0}^\infty F_k\;.\]
This set $F$ is an uncountable, totally disconnected set, informal called a ``fractal dust.'' More formally it is the \emph{attractor} of the \emph{iterated function system} $\left(S_0,S_1,S_2,S_3\right)$, i.e., it is a \emph{self-similar fractal}.
\begin{figure}
\centering
\parbox{1in}
{
\begin{tikzpicture}[scale=0.3]
\path[fill=lightgray] (0,0) -- (0,4) -- (4,4) -- (4,0);
\path[fill=darkgray] (1,0) -- (1,1) -- (2,1) -- (2,0);
\path[fill=darkgray] (3,1) -- (3,2) -- (4,2) -- (4,1);
\path[fill=darkgray] (2,3) -- (2,4) -- (3,4) -- (3,3);
\path[fill=darkgray] (0,2) -- (0,3) -- (1,3) -- (1,2);
\draw [<->] (0,-2) -- (0,6);
\draw [<->] (-2,0) -- (6,0);
\end{tikzpicture}
}\qquad\qquad
\parbox{2.5in}{
\begin{tikzpicture}[scale=1.2]
\path[fill=lightgray] (-3,1) -- (0,1) -- (3,4) -- (3,0) -- (0,0) -- (-3,-3);
\path[fill=darkgray] (0,0) -- (3,0.75) -- (3,1.5) -- (0,0.25);
\path[fill=darkgray] (0,0.25) -- (3,2.5) -- (3,3.25) -- (0,0.5);
\path[fill=darkgray] (0,0.5) -- (3,0.5) -- (3,1.25) -- (0,0.75);
\path[fill=darkgray] (0,0.75) -- (3,2.25) -- (3,3) -- (0,1);
\path[fill=darkgray] (0,1) -- (-3,0.25) -- (-3,-0.5) -- (0,0.75);
\path[fill=darkgray] (0,0.75) -- (-3,-1.5) -- (-3,-2.25) -- (0,0.5);
\path[fill=darkgray] (0,0.5) -- (-3,0.5) -- (-3,-0.25) -- (0,0.25);
\path[fill=darkgray] (0,0.25) -- (-3,-1.25) -- (-3,-2) -- (0,0);
\draw [<->] (0,4) -- (0,-3);
\draw [<->](-3,0) -- (3,0);
\end{tikzpicture}
}
\caption{$F_0$ and $F_1$, along with their line sets. $F_0$ and $\mathcal{L}\left(F_0\right)$ are shaded gray; $F_1$ and $\mathcal{L}\left(F_1\right)$ are black.}
\label{fig:1}
\end{figure}
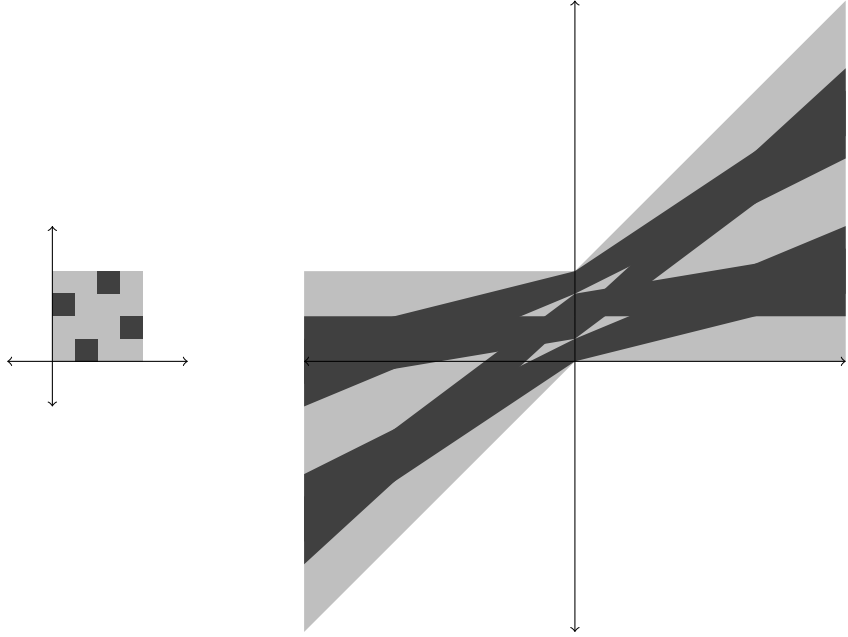

Let $\Ref_Y:\R^2\to\R^2$ and $\Rot_\theta:\R^2\to\R^2$ denote reflection across the $y$-axis and rotation about the origin by the angle $\theta$, respectively. The set
\begin{equation}\label{eq:5}
B=\mathcal{L}\left(F\right)\cup\Rot_{\frac{\pi}{2}}\left(\mathcal{L}\left(F\right)\right)\cup\Ref_Y\bigl(\mathcal{L}\left(F\right)\cup\Rot_{\frac{\pi}{2}}\left(\mathcal{L}\left(F\right)\right)\bigr)
\end{equation}
is the Besicovitch set that we use for our main theorem.
\begin{obser}\label{obs:2}
The set $B$ contains a line in every direction in $\R^2$.
\end{obser}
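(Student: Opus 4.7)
The plan is to parameterize directions in $\R^2$ by slope in $\R \cup \{\infty\}$ and show that the four pieces of $B$ appearing in~\eqref{eq:5} cover every slope between them.

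The key step will be to establish that the projection of $F$ onto its first coordinate is all of $[0,1]$. Looking at the IFS, each $S_i$ acts on the first coordinate as $x \mapsto (x+i)/4$, so the images $S_i\left([0,1]^2\right)$ project onto the intervals $[i/4,(i+1)/4]$, which tile $[0,1]$. A routine induction then yields that the first-coordinate projection of $F_k$ is $[0,1]$ for every $k$. To push this down to $F = \bigcap_k F_k$, I would invoke compactness: given $x \in [0,1]$, choose $y_k$ with $(x,y_k) \in F_k$, extract a convergent subsequence $y_{k_j} \to y$ using compactness of $F_0 = [0,1]^2$, and use nestedness of the $F_k$ to conclude $(x,y) \in F$.

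Once the projection claim is in hand, for every slope $m \in [0,1]$ some $b$ satisfies $(m,b) \in F$, so $\mathcal{L}_{m,b} \subseteq \mathcal{L}\left(F\right) \subseteq B$, and $B$ contains a line of every slope in $[0,1]$. The rest is slope bookkeeping on the three transformed copies: $\Rot_{\pi/2}$ sends a line of slope $m$ to one of slope $-1/m$ (with the convention that $m=0$ becomes a vertical line), and $\Ref_Y$ sends slope $m$ to slope $-m$. Applying these to the range $[0,1]$ contributes slopes in $(-\infty,-1] \cup \{\infty\}$, $[-1,0]$, and $[1,\infty) \cup \{\infty\}$ respectively; together with $[0,1]$ these exhaust $\R \cup \{\infty\}$, so every direction is realized by a line in $B$.

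The only nontrivial point is the compactness argument lifting the projection from each $F_k$ to $F$; the slope computations are routine once that is in place.
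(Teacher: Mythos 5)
Your proof is correct and follows essentially the same route as the paper's: the paper also reduces to showing that $\mathcal{L}(F)$ contains a line of every slope $m\in[0,1]$, which it justifies by noting that each $F_k$, and hence $F$, contains a point of the form $(m,b)$. You have simply made explicit the two steps the paper leaves implicit --- the nested-compact-sets argument passing from the $F_k$ to $F$, and the slope bookkeeping for the rotated and reflected copies in \eqref{eq:5}.
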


\begin{proof}
Let $m\in[0,1]$. By \eqref{eq:5} it suffices to show that $\mathcal{L}(F)$ contains a line of slope $m$. But this is clear, since each $F_k$, and hence $F$, contains a point of the form $(m,b)$.
\end{proof}

Using the duality principle and some nontrivial fractal geometry, Falconer also proved the following.
\begin{lemm}\label{lem:3}
\rm(\cite{Falc80,Falc85}) \it The set $B$ has Lebesgue measure $0$.
\end{lemm}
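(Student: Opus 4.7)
\noindent\textit{Proof plan.} My plan is to reduce $m(B) = 0$ to a projection statement about $F$ and then appeal to a classical theorem in geometric measure theory.

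First, since $B$ in \eqref{eq:5} is a union of four isometric images of $\mathcal{L}(F)$ and Lebesgue measure is isometry-invariant, it suffices to prove $m(\mathcal{L}(F)) = 0$. Next, a point $(x, y)$ lies on $\mathcal{L}_{m, b}$ iff $y = mx + b$, so the vertical slice of $\mathcal{L}(F)$ at $x$ is $\{x\} \times \pi_x(F)$, where $\pi_x(m, b) = mx + b$. By Fubini,
\[
m(\mathcal{L}(F)) = \int_\R m_1(\pi_x(F))\, dx,
\]
so it suffices to prove $m_1(\pi_x(F)) = 0$ for almost every $x \in \R$. Up to a scalar, each $\pi_x$ is the orthogonal projection onto the line through the origin in direction $(x, 1)$, and as $x$ ranges over $\R$ this direction sweeps out half the unit circle, so the task is equivalent to showing that $F$ has a Lebesgue-null projection in almost every direction.

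For the projection step I would invoke the Besicovitch--Federer projection theorem: every purely $1$-unrectifiable set in $\R^2$ with finite $\mathcal{H}^1$-measure projects to a Lebesgue-null set in almost every direction. The IFS $(S_0, S_1, S_2, S_3)$ consists of four similarities of common contraction ratio $1/4$ whose images $S_i([0,1]^2)$ are pairwise disjoint (strong separation condition), so $F$ is a self-similar set of similarity dimension $1$ with $0 < \mathcal{H}^1(F) < \infty$. It then remains to verify pure $1$-unrectifiability of $F$, i.e., that every rectifiable curve meets $F$ in an $\mathcal{H}^1$-null set.

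The main obstacle is this pure-unrectifiability step, since total disconnectedness of $F$ does not by itself imply it. One must exploit the specific staircase arrangement encoded by $a_0 = 2$, $a_1 = 0$, $a_2 = 3$, $a_3 = 1$---which separates the four level-$k$ daughter cells by gaps of order $4^{-k}$---to argue that no rectifiable curve can be tangent to $F$ along a set of positive $\mathcal{H}^1$-measure. If this argument proves unpleasant, a viable fallback is to estimate $m_1(\pi_x(F_k))$ directly by means of the self-similar recursion $\pi_x(F_k) = \bigcup_{i = 0}^{3} \tfrac{1}{4}\bigl(\pi_x(F_{k-1}) + (ix + a_i)\bigr)$, showing that the four translates on the right-hand side overlap substantially for almost every $x$; this yields $m_1(\pi_x(F_k)) \to 0$, and the conclusion then follows by downward monotone convergence from $\pi_x(F) = \bigcap_k \pi_x(F_k)$ (an equality that holds by compactness of each $F_k$).
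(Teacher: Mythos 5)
The paper does not prove this lemma; it quotes it from Falconer, and your outline is essentially the cited argument: reduce to $m(\mathcal{L}(F))=0$ by isometry invariance, slice vertically so that Fubini converts the claim into ``$\pi_x(F)$ is Lebesgue-null for almost every $x$,'' identify each $\pi_x$ up to a nonzero scalar with an orthogonal projection, and invoke the Besicovitch--Federer projection theorem. The reductions are all sound: $\mathcal{L}(F)$ is closed (Section \ref{se:4} notes that $\mathcal{L}$ maps compact sets to closed sets), so Fubini applies; the four squares $S_i([0,1]^2)$ are indeed pairwise disjoint, so $F$ is a self-similar $1$-set with $0<\mathcal{H}^1(F)<\infty$; and the one excluded (horizontal) direction is harmless.

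The genuine gap is the one you flag yourself: pure $1$-unrectifiability of $F$ carries the entire mathematical content of the lemma, and neither of your two suggestions closes it. The fallback in particular is not a lighter alternative. The trivial bound gives $m_1(\pi_x(F_k))\leq 1+|x|$ uniformly in $k$, with no per-step contraction (four translates of a set scaled by $1/4$ preserve measure exactly when they are disjoint), so ``the translates overlap substantially for a.e.\ $x$'' is a restatement of the projection theorem for this set rather than a route to it; quantifying that decay is notoriously delicate (it is the Favard-length problem for four-corner-type Cantor sets). To finish your main route, use the standard tangent argument: a rectifiable curve $\Gamma$ has a tangent at $\mathcal{H}^1$-a.e.\ point, so if $\mathcal{H}^1(F\cap\Gamma)>0$ then $F$ would admit an approximate tangent line at some point; but every $4$-adic cell $F(w)$ contains points of $F$ in each of its four daughter cells, whose centers sit at the non-collinear relative positions $\left(\frac{2i+1}{8},\frac{2a_i+1}{8}\right)$, so at every scale around every point of $F$ the set stays a fixed proportion of the cell diameter away from any single line, and no tangent exists anywhere on $F$. (Equivalently, cite the standard fact that a self-similar $1$-set satisfying the strong separation condition is purely $1$-unrectifiable unless it lies on a line, and observe that $F$ contains the non-collinear fixed points $(i,a_i)/3$ of the maps $S_i$.) With that step supplied, your argument is complete and coincides with the proof the paper is citing.
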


It is not obvious whether or how the proof of Lemma \ref{lem:3} can be effectivized. Nevertheless we prove the following.
\begin{thm}\label{thm:4}\rm(main theorem, in $\R^2$) \it The set $B$ has computable measure $0$. Hence there is, in every direction in $\R^2$, a line that misses every computably random point.
\end{thm}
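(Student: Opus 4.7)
The plan is to exhibit a single computable martingale $d$ on $[0,1)^2$ that succeeds at every point of $B^\#$, built as a doubly-indexed weighted sum of the open set martingales of Theorem~\ref{thm:1}. Because $B$ is unbounded, the periodization $\mathcal{L}(F_k)^\#$ typically has measure close to $1$ (the wrap of any strip of irrational slope is dense on the torus), so $\mathcal{L}(F_k)^\#$ is not a useful cover of $B^\#$. Instead I would use the bounded decomposition $B^\# = \bigcup_{\mb{t}\in\Z^2}S_\mb{t}$, where $S_\mb{t}:=(B-\mb{t})\cap[0,1)^2$, build a computable martingale $e_\mb{t}$ succeeding on each $S_\mb{t}$, and then combine the $e_\mb{t}$'s.

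For each pair $(k,\mb{t})$, let $V_{k,\mb{t}}:=(B_k-\mb{t})\cap[0,1)^2$, where $B_k$ is defined from $F_k$ exactly as $B$ is from $F$. Each $V_{k,\mb{t}}$ is a finite union of polygons with dyadic rational vertices, so its Lebesgue measure is a computable rational. The sequence $(V_{k,\mb{t}})_k$ is decreasing, a short compactness argument in $(m,b)$-space gives $\bigcap_k V_{k,\mb{t}}=S_\mb{t}$, and $m(S_\mb{t})=0$ by Lemma~\ref{lem:3}, so downward continuity forces $m(V_{k,\mb{t}})\to 0$. For each $j\ge 1$ I would effectively search for $k_j(\mb{t})$ with $m(V_{k_j,\mb{t}})\le 2^{-j-1}$, then take $G_{j,\mb{t}}$ to be an open, dyadic-polygonal fattening of $V_{k_j,\mb{t}}$ in $[0,1)^2$ containing $S_\mb{t}$, with $m(G_{j,\mb{t}})\le 2^{-j}$. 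The open set martingale $d_{G_{j,\mb{t}}}$ of Section~\ref{se:3} is then computable uniformly in $(j,\mb{t})$.

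Now put $e_\mb{t}:=\sum_{j\ge 1}2^{-j}\,d_{G_{j,\mb{t}}}$ and $d:=\sum_{\mb{t}\in\Z^2}2^{-|t_1|-|t_2|-2}\,e_\mb{t}$; both weight sequences are summable, so $d$ is a nonnegative martingale with finite initial capital. By Theorem~\ref{thm:1}, whenever $\mb{x}\in S_\mb{t}\subseteq G_{j,\mb{t}}$ we have $d_{G_{j,\mb{t}}}^{(r)}(\mb{x})=1/m(G_{j,\mb{t}})\ge 2^j$ for $r$ sufficiently large, so $e_\mb{t}^{(r)}(\mb{x})\to\infty$. Given any $\mb{x}\in B^\#$, choose $\mb{t}_0$ with $\mb{x}\in S_{\mb{t}_0}$; then $d^{(r)}(\mb{x})\ge 2^{-|t_{0,1}|-|t_{0,2}|-2}\,e_{\mb{t}_0}^{(r)}(\mb{x})\to\infty$, so $d$ succeeds at $\mb{x}$. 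The extension from $\mathcal{L}(F)$ to all four components of $B$ in~\eqref{eq:5} is handled by performing the same construction on each piece and summing the four martingales.

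The hard part is the bookkeeping that makes this doubly-indexed construction uniformly computable and lets us approximate $d$ to precision $2^{-s}$ at every depth-$r$ dyadic cube. Since a martingale value at depth $r$ is at most $2^{2r}$ times its initial capital, truncating the $\mb{t}$-sum at $|t_1|+|t_2|\le T$ and the $j$-sum at $j\le J$ gives error $O\bigl(2^{2r}(T\,2^{-T}+2^{-J})\bigr)$, which can be balanced against $2^{-s}$ by choosing $T$ and $J$ linear in $r+s$. A secondary delicate step is constructing the open polygonal fattening $G_{j,\mb{t}}$ with the claimed measure bound, which is done by slightly enlarging each polygonal piece of $V_{k_j,\mb{t}}$ to an open polygon with dyadic vertices, adding measure controlled by the perimeter times a small dyadic parameter tuned so that the total added measure stays at most $2^{-j-1}$.
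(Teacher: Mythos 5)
Your proposal is correct and follows essentially the same route as the paper: decompose $B^\#$ by integer translates, use Lemma \ref{lem:3} together with the computability of $m\bigl(( \mathcal{L}(F_k)+\mb{t})\cap[0,1)^2\bigr)$ to search effectively for a stage of measure at most $2^{-j}$, and sum weighted open set martingales from Theorem \ref{thm:1} with the same truncation bookkeeping. The only real difference is how you obtain open sets: you fatten the closed polygonal stages by a controlled amount of extra measure, whereas the paper passes to the interiors $\mathcal{L}(F_k)^\circ$ and compensates for the boundary points this loses (Corollary \ref{nestingcor}) with an auxiliary martingale $d_Y$ betting on the $y$-axis; both devices work.
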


To prove Theorem \ref{thm:4}, we begin with a property of the line set operator.

\begin{lemm}\label{fourcornerslem}
Let $I$ and $J$ be closed intervals of finite length, so that $R=I\times J$ is a solid rectangle. Let $R^\prime=(I\times J^\circ)\cup(I^\circ\times J)$ be $R$ with its four corners removed. Then
\[\mathcal{L}(R^\prime)\subseteq\mathcal{L}(R)^\circ\cup Y,\]
where $Y=\{(0,y)\;|\;y\in\R\}$ is the $y$-axis of $\R^2$.
\end{lemm}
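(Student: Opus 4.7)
The plan is to reduce the inclusion to an elementary observation about linear functionals on the rectangle $R$. First I would note that $R'$ is precisely $R$ with its four corners removed: for $(m, b) \in R$, membership in $R'$ fails exactly when $b \in \partial J$ and $m \in \partial I$, i.e., when $(m, b)$ is one of the four corners of $R$.

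Fix $(m, b) \in R'$ and $(x, y) = (x, mx + b) \in \mathcal{L}_{m, b}$. If $x = 0$ then $(x, y) \in Y$ and we are done, so I would focus on the case $x \neq 0$ and show $(x, y) \in \mathcal{L}(R)^\circ$. The idea is to describe $\mathcal{L}(R)$ as a vertical strip: define
\[\phi_-(x) = \min_{(m', b') \in R} (m'x + b'), \qquad \phi_+(x) = \max_{(m', b') \in R} (m'x + b').\]
Since $R$ is compact and connected, $\mathcal{L}(R) = \{(x, y) \in \R^2 : \phi_-(x) \leq y \leq \phi_+(x)\}$, and $\phi_\pm$ are continuous in $x$ (in fact concave and convex respectively, as infima and suprema of affine functions).

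The crux is that for $x \neq 0$ the linear functional $\Phi_x(m', b') = m'x + b'$ has level sets of slope $-x$ in the $(m', b')$-plane, which are never parallel to the axis-aligned sides of $R$. Consequently $\Phi_x$ attains its minimum and maximum on $R$ each at a single corner of $R$, with strict inequality at every other point. Since $(m, b) \in R'$ is not a corner, this yields $\phi_-(x) < y < \phi_+(x)$. Continuity of $\phi_\pm$ then produces an open axis-aligned rectangle about $(x, y)$ lying strictly between the graphs of $\phi_-$ and $\phi_+$, hence contained in $\mathcal{L}(R)$, which witnesses $(x, y) \in \mathcal{L}(R)^\circ$.

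I do not anticipate a serious obstacle; the only delicate conceptual point is why the $y$-axis must be the exceptional direction, and this emerges automatically because $\Phi_0$ is independent of $m'$ and therefore attains its extrema along entire horizontal edges of $R$ rather than at isolated corners, so the strict inequalities above degenerate precisely when $x = 0$.
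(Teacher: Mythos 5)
Your argument is correct, and it takes a genuinely different route from the paper's. The paper works directly with the two pieces of $R^\prime$: for $(m,b)\in I\times J^\circ$ it perturbs the intercept, so that $\mathcal{L}_{m,b}$ sits inside the open strip $\bigcup_{|b^\prime-b|<\varepsilon}\mathcal{L}_{m,b^\prime}\subseteq\mathcal{L}(R)$; for $(m,b)\in I^\circ\times J$ it perturbs the slope, producing an open ``bowtie'' of lines through $(0,b)$, which is open everywhere except at the pivot point itself---this is exactly where the exceptional set $Y$ enters. You instead give a global, fiberwise description of $\mathcal{L}(R)$ over each vertical line: $\mathcal{L}(R)\cap(\{x\}\times\R)=\{x\}\times[\phi_-(x),\phi_+(x)]$ with $\phi_-$ concave and $\phi_+$ convex (hence continuous), and then reduce the lemma to the observation that for $x\neq0$ the functional $(m^\prime,b^\prime)\mapsto m^\prime x+b^\prime$ is strictly monotone in both coordinates and so attains its extrema on $R$ only at corners; continuity of $\phi_\pm$ upgrades the strict inequalities to membership in the interior. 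Both proofs are complete; yours buys a sharper structural picture (it identifies the boundary of $\mathcal{L}(R)$ away from $Y$ as the graphs of $\phi_\pm$, and explains why only corners of $R$ can contribute boundary points there), at the cost of slightly more machinery, while the paper's perturbation argument is shorter and needs nothing beyond the definition of $\mathcal{L}$. One small point worth making explicit if you write this up: the identification of $\mathcal{L}(R)$ with the region between the graphs uses that $R$ is compact and connected so that each image $\Phi_x(R)$ is a closed interval, and the final step needs the two-variable continuity estimate (an open box $(x-\delta,x+\delta)\times(y-\eta,y+\eta)$ squeezed strictly between the graphs), which you have correctly indicated.
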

\begin{proof}
First let $(m,b)\in I\times J^\circ$. Then there exists $\varepsilon>0$ such that
\[\{m\}\times(b-\varepsilon,b+\varepsilon)\subseteq I\times J\;.\]
Then $\mathcal{L}_{m,b^\prime}\subseteq\mathcal{L}(I\times J)=\mathcal{L}(R)$ holds for all $b^\prime\in(b-\varepsilon,b+\varepsilon)$, so $\mathcal{L}_{m,b}\subseteq\mathcal{L}(R)^\circ$. This shows that $\mathcal{L}(I\times J^\circ)\subseteq\mathcal{L}(R)^\circ$.

Now let $(m,b)\in I^\circ\times J$. Then there exists $\varepsilon>0$ such that
\[(m-\varepsilon, m+\varepsilon)\times\{b\}\subseteq I\times J\;.\]
Then $\mathcal{L}_{m^\prime,b}\subseteq\mathcal{L}(I,J)=\mathcal{L}(R)$ holds for all $m^\prime\in(m-\varepsilon,m+\varepsilon)$, so
\[\mathcal{L}_{m,b}\subseteq\mathcal{L}(R)^\circ\cup\{(0,b)\}\subseteq\mathcal{L}(R)^\circ\cup Y\;.\]
This shows that $\mathcal{L}(I^\circ\times J)\subseteq\mathcal{L}(R)\cup Y$.
\end{proof}

Lemma \ref{fourcornerslem} has the following consequence for the stages $F_k$ in the construction of $F$.
\begin{coro}\label{nestingcor}
For every $k\in\N$
\[\mathcal{L}(F_{k+1})\subseteq\mathcal{L}(F_k)^\circ\cup Y.\]
\end{coro}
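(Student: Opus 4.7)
My plan is to apply Lemma \ref{fourcornerslem} separately to each square $F(v)$, $v\in\Sigma^k$, using the fact that the four pieces $F(vj)$, $j\in\Sigma$, fit inside $F(v)$ while avoiding its corners.

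First I would reindex the words of length $k+1$ as $vj$ with $v=i_1\cdots i_k\in\Sigma^k$ and $j\in\Sigma$. Setting $T_v=S_{i_1}\circ\cdots\circ S_{i_k}$ and unwinding the recursion yields $F(v)=T_v([0,1]^2)$ and $F(vj)=T_v(S_j([0,1]^2))$. Because each $S_i$ scales each coordinate independently by $1/4$ and then translates, $T_v$ acts as $(x,y)\mapsto(4^{-k}x+c,\,4^{-k}y+c')$ for some constants $c,c'$; hence $F(v)$ is an axis-aligned square $I_v\times J_v$ of side $4^{-k}$, and $T_v$ bijectively carries the four corners of $[0,1]^2$ to the four corners of $F(v)$.

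Next I would verify directly from the values $a_0=2,\,a_1=0,\,a_2=3,\,a_3=1$ that the four little squares $S_j([0,1]^2)=[\tfrac{j}{4},\tfrac{j+1}{4}]\times[\tfrac{a_j}{4},\tfrac{a_j+1}{4}]$ together avoid all four corners of $[0,1]^2$: any corner $(x,y)$ with $x\in\{0,1\}$ could lie only in $S_0$ or $S_3$, but the corresponding $y$-intervals $[\tfrac{1}{2},\tfrac{3}{4}]$ and $[\tfrac{1}{4},\tfrac{1}{2}]$ exclude $y\in\{0,1\}$. Applying $T_v$, this gives
\[\bigcup_{j\in\Sigma}F(vj)\subseteq F(v)\setminus(\text{four corners of }F(v))=(I_v\times J_v^\circ)\cup(I_v^\circ\times J_v),\]
which is precisely the set $R'$ of Lemma \ref{fourcornerslem} with $R=F(v)$.

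Lemma \ref{fourcornerslem} then gives $\mathcal{L}(\bigcup_{j}F(vj))\subseteq\mathcal{L}(F(v))^\circ\cup Y$. Taking the union over $v\in\Sigma^k$ and using monotonicity of $\mathcal{L}$ and of the interior operator yields
\[\mathcal{L}(F_{k+1})=\bigcup_{v\in\Sigma^k}\mathcal{L}\Bigl(\bigcup_{j\in\Sigma}F(vj)\Bigr)\subseteq\bigcup_{v\in\Sigma^k}\bigl(\mathcal{L}(F(v))^\circ\cup Y\bigr)\subseteq\mathcal{L}(F_k)^\circ\cup Y,\]
as desired. The only non-routine step is the small finite check that $F_1$ avoids the corners of $[0,1]^2$; no serious obstacle arises, since everything else is monotonicity and the self-similar structure.
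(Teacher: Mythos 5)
Your proof is correct and follows the same route as the paper: the paper's one-line argument ("$F_{k+1}$ does not contain any of the corners of the squares comprising $F_k$") is exactly your observation, applied square-by-square via Lemma \ref{fourcornerslem} and then combined by monotonicity of $\mathcal{L}$ and of the interior. You have simply written out in full the corner check and the union bookkeeping that the paper leaves implicit.
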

\begin{proof}
It suffices to note that $F_{k+1}$ does not contain any of the corners of the squares comprising $F_k$.
\end{proof}
\begin{proof}[Proof of Theorem \ref{thm:4}]
Trivial martingale transformations show that the sets of computable measure $0$ in $\R^2$ are closed under $90^\circ$ rotations, reflections about the coordinate axes, and finite unions. Hence by \eqref{eq:5} it suffices to prove that $\mathcal{L}(F)$ has computable measure $0$. We do this by presenting a computable martingale $d$ that succeeds at every point $\mb{x}\in\mathcal{L}(F)^\#$, where
\[\mathcal{L}(F)^\#=[0,1)^2\cap\bigcup_{\mb{t}\in\Z^2}(\mathcal{L}(F)+\mb{t})\]
is defined as in \eqref{eq:2}.

For each $\mb{t}\in\Z^2$ and $k\in\N$ let
\[H_{\mb{t},k}=[0,1)^2\cap\left(\mathcal{L}(F_k)^\circ+\mb{t}\right),\]
noting that $H_{\mb{t},k}$ is an open set in the subspace $[0,1)^2$ of $\R^2$. The sets $F_k$ are so simply defined that the function $h:\Z^2\times\N\to\Q$ defined by
\[h(\mb{t},k)=m(H_{\mb{t},k})\]
is computable. For each $\mb{t}\in\Z^2$ Lemma \ref{lem:3} tells us that
\begin{align*}
0&=m\left([0,1)^2\cap(\mathcal{L}(F)+\mb{t})\right)\\
&=m\left(\bigcap_{k=0}^\infty([0,1)^2\cap(\mathcal{L}(F_k)+\mb{t}))\right)\\
&=\lim_{k\to\infty}m\left([0,1)^2\cap(\mathcal{L}(F_k)+\mb{t})\right)\\
&=\lim_{k\to\infty}h(\mb{t},k).
\end{align*}
Hence the function $k:\Z^2\times\N\to\N$ defined by
\[k(\mb{t},j)=\textrm{the least }k\textrm{ such that }g(\mb{t},k)\leq2^{-j}\]
is also computable.

For each $\mb{t}=(t_1,t_2)\in\Z^2$ and $j\in\N$, define the set $G_{\mb{t},j}$ and the coefficient $c_{\mb{t},j}$ as follows.
\begin{enumerate}[(i)]
\item If $H_{\mb{t},k(\mb{t},|t_1|+|t_2|+j)}\neq\emptyset$, then
\[G_{\mb{t},j}=H_{\mb{t},k(\mb{t},|t_1|+|t_2|+j)}\]
and
\[c_{\mb{t},j}=m(G_{\mb{t},j}).\]
\item Otherwise,
\[G_{\mb{t},j}=[0,1)^2\]
and
\[c_{\mb{t},j}=2^{-(|t_1|+|t_2|+j)}.\]
\end{enumerate}
Define the special-purpose martingale $d_Y$ by
\[d_Y(Q_{r}(\mb{u}))=\left\{\begin{array}{lr}2^r&\textrm{ if }u_1=0\\0&\textrm{ if }u_1>0\end{array}\right.\]
for all $r\in\N$ and $\mb{u}=(u_1,u_2)\in\{0,...,2^r-1\}^2$. Finally, let
\[d=d_Y+\sum_{\mb{t}\in\Z^2}\sum_{j=0}^\infty c_{\mb{t},j}d_{G_{\mb{t},j}}\;,\]
where each $d_{G_{\mb{t},j}}$ is defined from $G_{\mb{t},j}$ as in Section \ref{se:3}. Then
\begin{align*}
d\left([0,1)^2\right)&\leq1+\sum_{\mb{t}\in\Z^2}\sum_{j=0}^\infty2^{-(|t_1|+|t_2|+j)}\\
&=19<\infty,
\end{align*}
so $d$ is a martingale.

To see that $d$ is computable, define
\[\widehat{d}:\N\times J\to\Q\]
(where $J$ is defined as in section 2) by
\[\widehat{d}(s,r,\mb{u})=d_Y(Q_r(\mb{u}))+\sum_{t_1=-p}^p\sum_{t_2=-p}^p\sum_{j=0}^p c_{\mb{t},j}d_{G_{\mb{t},j}}(Q_r(\mb{u})),\]
where $p=s+2r+6$. Then $\widehat{d}$ is computable, and it is clear that $\widehat{d}(s,r,\mb{u})\leq d(Q_r(\mb{u}))$ holds for all $(s,r,\mb{u})\in\N\times J$. We now fix $(s,r,\mb{u})\in\N\times J$, let $p=s+2r+6$, and estimate the difference $d(Q_r(\mb{u}))-\widehat{d}(s,r,\mb{u})$.

For each index set $I\subseteq\Z^2\times\N$ define the sums
\[\sigma(I)=\sum_{(\mb{t},j)\in I}c_{\mb{t},j}d_{G_{\mb{t},j}}(Q_r(\mb{u}))\]
and
\[\tau(I)=\sum_{(\mb{t},j)\in I}2^{-(|t_1|+|t_2|+j)}.\]
By the trivial bound $d_{G_{\mb{t},j}}(Q_r(\mb{u}))\leq4^r$ and the fact that $c_{\mb{t},j}\leq2^{-(|t_1|+|t_2|+j)}$ always holds, we have
\[\sigma(I)\leq4^r\tau(I)\]
for every $I\subseteq\Z^2\times\N$. Now
\[d(Q_r(\mb{u}))=d_Y(Q_r(\mb{u}))+\sigma\left(\Z^2\times\N\right)\;,\]
and
\[\widehat{d}(s,r,\mb{u})=d_Y(Q_r(\mb{u}))+\sigma(I_0),\]
where
\[I_0=\left\{(t_1,t_2,j)\;\big|\;-p\leq t_1\leq p,\;-p\leq t_2\leq p,\;j\leq p\right\}\;,\]
so
\begin{align*}
d(Q_r(\mb{u}))-\widehat{d}(s,r,\mb{u})&=\sigma\left(\left(\Z^2\times\N\right)\smallsetminus I_0\right)\\
&\leq4^r\tau\left(\left(\Z^2\times\N\right)\smallsetminus I_0\right)\;.
\end{align*}
If we let
\[I_a=\left\{(t_1,t_2,j)\;\big|\;|t_a|>p\right\}\]
for $a\in\{1,2\}$ and
\[I^+=\left\{(t_1,t_2,j)\;\big|\;j>p\right\},\]
then
\[\left(\Z^2\times\N\right)\smallsetminus I_0\subseteq I_1\cup I_2\cup I^+,\]
so
\[d(Q_r(\mb{u}))-\widehat{d}(s,r,\mb{u})\leq4^r\left(\tau(I_1)+\tau(I_2)+\tau\left(I^+\right)\right).\]
Now
\begin{align*}
\tau(I_1)&=\tau(I_2)\\
&=2\sum_{t_1=p+1}^\infty2^{-t_1}\sum_{t_2=-\infty}^\infty2^{-|t_2|}\sum_{j=0}^\infty2^{-j}\\
&=12\sum_{t_1=p+1}^\infty2^{-t_1}\\
&=12\cdot2^{-p},
\end{align*}
and
\begin{align*}
\tau(I^+)&=\sum_{t_1=-\infty}^\infty2^{-|t_1|}\sum_{t_2=-\infty}^\infty2^{-|t_2|}\sum_{j=p+1}^\infty2^{-j}\\
&=9\cdot2^{-p},
\end{align*}
so
\begin{align*}
d(Q_r(\mb{u}))-\widehat{d}(s,r,\mb{u})&\leq4^r\cdot33\cdot2^{-p}\\
&=33\cdot2^{-(s+6)}\\
&<2^{-s}\;.
\end{align*}
Hence $\widehat{d}$ testifies that $d$ is computable.

To see that $d$ succeeds at every point in $\mathcal{L}(F)^\#$, let $\mb{x}\in[0,1)^2\cap(\mathcal{L}(F)+\mb{t})$. By Corollary \ref{nestingcor} we have two cases.

Case 1. $\mb{x}\in Y$. Then
\begin{align*}
\limsup_{r\to\infty}d^{(r)}(\mb{x})&\geq\limsup_{r\to\infty}d_Y^{(r)}(\mb{x})\\
&=\limsup_{r\to\infty}2^r\\
&=\infty\;,
\end{align*}
so $d$ succeeds at $\mb{x}$.

Case 2. $\mb{x}\in\mathcal{L}(F_k)$ for every $k\in\N$. Then $\mb{x}\in H_{\mb{t},k}$ for every $k\in\N$, so clause (i) holds in the definitions of $G_{\mb{t},j}$ and $c_{\mb{t},j}$ for every $j\in\N$, with $\mb{x}\in G_{\mb{t},j}$. By Theorem \ref{thm:1}, this implies that
\begin{align*}
\limsup_{r\to\infty}d^{(r)}(\mb{x})&\geq\limsup_{r\to\infty}\sum_{j=0}^\infty c_{\mb{t},j}d_{G_{\mb{t},j}}^{(r)}(\mb{x})\\
&=\limsup_{r\to\infty}\sum_{j=0}^\infty m(G_{\mb{t},j})d_{G_{\mb{t},j}}^{(r)}(\mb{x})\\
&=\infty,
\end{align*}
whence $d$ succeeds at $\mb{x}$.
\end{proof}

In remarks on an early draft of this paper, Turetsky and an anonymous reviewer pointed out an alternative proof of Theorem \ref{thm:4}. The key fact, proved by Wang~\cite{Wang96,DowHir10}, is that every computably random point $\mb{x}$ is \emph{Kurtz random} (also called \emph{weakly random}~\cite{Kurt81}), meaning that $\mb{x}$ is not an element of any computably closed (i.e., $\Pi^0_1$) set of measure 0. Furthermore, the above-mentioned fact that the operator $\mathcal{L}$ maps compact sets to closed sets can be extended to prove that $\mathcal{L}$ maps bounded $\Pi^0_1$ sets to $\Pi^0_1$ sets. Finally, it is routine to verify that the fractal dust $F$ is a bounded $\Pi^0_1$ set. These things and Lemma \ref{lem:3} imply that $\mathcal{L}(F)$ contains no computably random point, whence Theorem \ref{thm:4} holds by Observation \ref{obs:2}. This elegant proof is simpler than our martingale construction, even when Wang's proof is included. However, we believe that the direct martingale construction may help illuminate the path to results on time-bounded randomness, so we retain the martingale proof in this paper.
\end{section}

\begin{section}{Betting in Doubly Exponential Time}\label{se:5}
In light of Theorem \ref{thm:4} it is natural to ask whether there is, in every direction in $\R^2$, a line that misses not only every computably random point, but every feasibly random point. We do not know the answer to this question at the time of this writing, but we prove a weaker result of this type in this section.

As noted in the introduction, Besicovitch constructed a \emph{Kakeya set}, a Lebesgue measure 0 plane set containing a unit line segment in every direction, in 1919. Our objective here is to specify a Kakeya set $K$ and prove that it has ee-measure 0 (a condition defined in Section \ref{se:2}). Our specification and proof take advantage of Besicovitch's original work, together with subsequent refinements by Perron~\cite{Perr28}, Schoenberg~\cite{Scho62}, and Falconer~\cite{Falc85}.

We first describe \emph{Perron trees}, the building blocks of our set $K$. Let
$$\tau=\triangle(\mb{u},\mb{v},\mb{w})$$
be a triangle of height $h$ with its base $\overline{\mb{uv}}$ on the $x$-axis. In this discussion we regard triangles as \emph{including} their interiors. Note that $\tau$ contains a line segment of length $h$ in every direction between the directions of $\overline{\mb{uw}}$ and $\overline{\mb{vw}}$. Given a positive integer $k$, cut $\tau$ into $2^k$ nonoverlapping triangles $\tau_1,...,\tau_{2^k}$ of equal area, as indicated in Figure \ref{fig:2}(a). (Throughout this discussion, sets in $\R^2$ are \emph{nonoverlapping} if their interiors are disjoint.) Besicovitch showed that these $2^k$ smaller triangles can be slid horizontally along the $x$-axis in such a way that their union, due to high overlap, has very small area. Perron simplified Besicovitch's overlap scheme to that depicted in Figure \ref{fig:2}(b). Note that, notwithstanding its small area, the set in Figure \ref{fig:2}(b) still contains a line segment in every direction between the directions of $\overline{\mb{uw}}$ and $\overline{\mb{vw}}$. Schoenberg coined the term \emph{Perron trees} for sets of the type depicted in Figure \ref{fig:2}(b) and gave a simpler, recursive ``sprouting construction'' of the Perron tree $P_k(\tau)$ as a union of $2^{k+1}-1$ \emph{nonoverlapping} triangles as in Figure \ref{fig:2}(c).
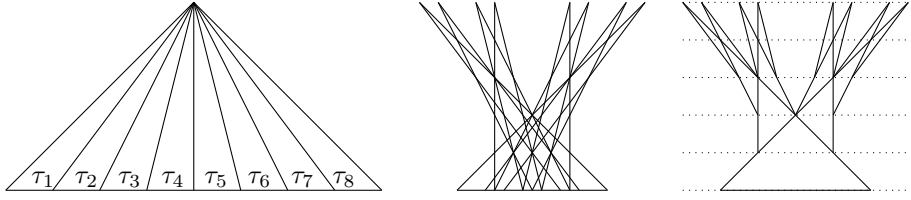
\begin{figure}[h]
	\centering
		\begin{tikzpicture}[scale=0.5]
			\node at (-20,0.3) {$\tau_1$};
			\node at (-18.85,0.3) {$\tau_2$};
			\node at (-17.7,0.3) {$\tau_3$};
			\node at (-16.55,0.3) {$\tau_4$};
			\node at (-15.4,0.3) {$\tau_5$};
			\node at (-14.25,0.3) {$\tau_6$};
			\node at (-13.1,0.3) {$\tau_7$};
			\node at (-12,0.3) {$\tau_8$};
			\draw (-21,0) -- (-16,5);
			\draw (-19.75,0) -- (-16,5);
			\draw (-18.5,0) -- (-16,5);
			\draw (-17.25,0) -- (-16,5);
			\draw (-16,0) -- (-16,5);
			\draw (-14.75,0) -- (-16,5);
			\draw (-13.5,0) -- (-16,5);
			\draw (-12.25,0) -- (-16,5);
			\draw (-11,0) -- (-16,5);
			\draw (-21,0) -- (-11,0);
			
			\draw (-9,0) -- (-4,5);
			\draw (-7.75,0) -- (-4,5);
			
			\draw (-8.25,0) -- (-4.5,5);
			\draw (-7,0) -- (-4.5,5);
			
			\draw (-8,0) -- (-5.5,5);
			\draw (-6.75,0) -- (-5.5,5);
			
			\draw (-7.25,0) -- (-6,5);
			\draw (-6,0) -- (-6,5);
			
			\draw (-8,0) -- (-8,5);
			\draw (-6.75,0) -- (-8,5);
			
			\draw (-7.25,0) -- (-8.5,5);
			\draw (-6,0) -- (-8.5,5);
			
			\draw (-7,0) -- (-9.5,5);
			\draw (-5.75,0) -- (-9.5,5);
			
			\draw (-6.25,0) -- (-10,5);
			\draw (-5,0) -- (-10,5);
			\draw (-9,0) -- (-5,0);

			\draw (-2,0) -- (2,0);
			\draw (-2,0) -- (3,5);
			\draw (2,0) -- (-3,5);
			
			\draw (-1,1) -- (-1,5);
			\draw (1,1) -- (1,5);
			
			\draw (-1,2) -- (-2.5,5);
			\draw (1,2) -- (2.5,5);
			
			\draw (0,2) -- (-1.5,5);
			\draw (0,2) -- (1.5,5);
			
			\draw (-1.5,3) -- (-3,5);
			\draw (1.5,3) -- (3,5);
			
			\draw (-1,3) -- (-2.5,5);
			\draw (1,3) -- (2.5,5);
			
			\draw (-1,3) -- (-1.5,5);
			\draw (1,3) -- (1.5,5);
			
			\draw (-0.5,3) -- (-1,5);
			\draw (0.5,3) -- (1,5);
			
			\draw[dotted] (-3,0) -- (3,0);
			\draw[dotted] (-3,1) -- (3,1);
			\draw[dotted] (-3,2) -- (3,2);
			\draw[dotted] (-3,3) -- (3,3);
			\draw[dotted] (-3,4) -- (3,4);
			\draw[dotted] (-3,5) -- (3,5);
		\end{tikzpicture}
	\caption{(a) a triangle cut into eight pieces $\tau_1,...\tau_8$; (b) a Perron tree constructed by sliding those pieces together; (c) the same Perron tree via Schoenberg's sprouting construction.}
	\label{fig:2}
\end{figure}

We now specify the Perron tree $P_k(\tau)$. Throughout this discussion script capital letters represent collections of nonoverlapping polygons. Let
$$\left(\wh{\mb{u}},\wh{\mb{v}},\wh{\mb{w}}\right)=\left(\frac{2\mb{u}}{h},\frac{2\mb{v}}{h},\frac{2\mb{w}}{h}\right)\;,$$
so that $\bigtriangleup\left(\wh{\mb{u}},\wh{\mb{v}},\wh{\mb{w}}\right)=\frac{2}{h}\tau$ is a triangle of height 2, similar to $\tau$, with its base on the $x$-axis. Let
$$\mathcal{T}_0=\left\{\bigtriangleup\left(\wh{\mb{u}},\wh{\mb{v}},\wh{\mb{w}}\right)\right\}\;,$$
and
$$\mathcal{T}_1=\left\{\bigtriangleup\left(\frac{\wh{\mb{u}}+\wh{\mb{w}}}{2},\wh{\mb{w}},\frac{3\wh{\mb{w}}-\wh{\mb{v}}}{2}\right),\bigtriangleup\left(\frac{\wh{\mb{v}}+\wh{\mb{w}}}{2},\wh{\mb{w}},\frac{3\wh{\mb{w}}-\wh{\mb{u}}}{2}\right)\right\}\;.$$
For $i\geq2$, let
$$\mathcal{T}_i=\bigcup_{t\in\mathcal{T}_{i-1}}\{\bigtriangleup(\mb{m}_t,\mb{c}_t,2\mb{c}_t-\mb{b}_t),\bigtriangleup(\mb{b}_t,\mb{c}_t,2\mb{c}_t-\mb{m}_t)\}\;,$$
where for $t\in\mathcal{T}_{i-1}$, $\mb{a}_t$, $\mb{b}_t$, and $\mb{c}_t$ are the vertices of $t$ with $y$-coordinates $i-1$, $i$, and $i+1$, respectively, and $\mb{m}_t$ is the midpoint of $\overline{\mb{a}_t\mb{c}_t}$. Let
$$\mathcal{T}=\bigcup_{i=0}^k\mathcal{T}_i\;.$$
Then
$$P_k(\tau)=\frac{h}{k+2}\bigcup\mathcal{T}$$
is the $k$-level Perron tree based on $\tau$.

As Schoenberg noted, we can define the same Perron tree as a union of shifted triangles $\tau_i$. Let
\begin{align*}
C&=\left\{x\;\big|\;(x,h)\in P_k(\tau)\right\}\\
&=\left\{x\;\big|\;\mb{c}_t=(x,h)\mbox{ for some }t\in\mathcal{T}_k\right\}\;,
\end{align*}
and index the elements of $C$ as $c_1,...,c_{2^k}$, where $c_i<c_{i+1}$ for $1\leq i<2^k$. Let
$$\mathcal{P}_k(\tau)=\left\{\tau_i+(c_i,0)\;\middle|\;1\leq i<2^k\right\}\;.$$
Then
$$P_k(\tau)=\bigcup\mathcal{P}_k(\tau)\;.$$
\begin{obser}[Falconer~\cite{Falc85}]\label{obs:5.1}
$P_k(\tau)$ is contained in the trapezoid with vertex set
$$\{2\mb{u}-\mb{v},\mb{w}-\mb{v}+\mb{u},\mb{w}+\mb{v}-\mb{u},2\mb{v}-\mb{u}\}\;.$$
\end{obser}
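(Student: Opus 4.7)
The plan is to show $P_k(\tau)\subseteq T$ by induction on the sprouting level $i$, verifying at each level that every triangle in $\frac{h}{k+2}\mathcal{T}_i$ lies in $T$. Because $T$ is a trapezoid and hence convex, the containment of a triangle reduces to containment of its three vertices, so the whole argument reduces to a statement about vertices.

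First I would record a useful geometric description of $T$. Its left edge is the line through $2\mb{u}-\mb{v}$ parallel to $\overline{\mb{uw}}$, and its right edge is the line through $2\mb{v}-\mb{u}$ parallel to $\overline{\mb{vw}}$. Equivalently, $T$ is the Minkowski sum of $\tau$ with the horizontal segment $[-|\mb{uv}|,|\mb{uv}|]\times\{0\}$, i.e., the union of all horizontal translates of $\tau$ by vectors of horizontal magnitude at most $|\mb{uv}|$. In particular, the two slanted edges of $T$ have exactly the same slopes as the two slanted edges of $\tau$.

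For the induction, the base case ($i=0$) is a direct computation verifying that $\frac{h}{k+2}\wh{\mb{u}}$, $\frac{h}{k+2}\wh{\mb{v}}$, and $\frac{h}{k+2}\wh{\mb{w}}$ all lie in $T$, using the description above. For the inductive step, suppose every scaled vertex of every $t\in\mathcal{T}_{i-1}$ lies in $T$. Each sprouting of a parent $t\in\mathcal{T}_{i-1}$ introduces three new vertices: the midpoint $\mb{m}_t$ of $\overline{\mb{a}_t\mb{c}_t}$ and the two reflected vertices $2\mb{c}_t-\mb{b}_t$ and $2\mb{c}_t-\mb{m}_t$. After scaling, $\mb{m}_t$ is a convex combination of vertices already known to be in $T$, so $T$ convex gives $\frac{h}{k+2}\mb{m}_t\in T$.

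The main obstacle is the two reflected vertices, which sit one unit of $\wh{\ }$-height above $\mb{c}_t$ and are not captured by convexity. To handle them I would strengthen the inductive hypothesis with a slope invariant: for every $t\in\mathcal{T}_i$, the segments $\overline{\mb{b}_t\mb{c}_t}$ and $\overline{\mb{m}_t\mb{c}_t}$ (in $\wh{\ }$-coordinates) have directions lying in the convex cone spanned by the directions of $\overline{\wh{\mb{u}}\wh{\mb{w}}}$ and $\overline{\wh{\mb{v}}\wh{\mb{w}}}$. Because the slanted edges of $T$ have precisely these two extreme directions, extending such a segment by one further unit of height from $\mb{c}_t$ produces a point whose scaled image remains within $T$'s horizontal extent at the new height. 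The slope invariant is preserved under sprouting by direct calculation from $\mathcal{T}_i=\bigcup_{t\in\mathcal{T}_{i-1}}\{\triangle(\mb{m}_t,\mb{c}_t,2\mb{c}_t-\mb{b}_t),\triangle(\mb{b}_t,\mb{c}_t,2\mb{c}_t-\mb{m}_t)\}$: the new segments $\overline{\mb{c}_t(2\mb{c}_t-\mb{b}_t)}$ and $\overline{\mb{c}_t(2\mb{c}_t-\mb{m}_t)}$ are obtained by reflection through $\mb{c}_t$ of sides already controlled by the hypothesis, and this reflection preserves the slope cone. This case analysis is the technical heart of the proof.
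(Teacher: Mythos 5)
The paper offers no proof of this observation---it is quoted from Falconer---so your argument has to stand on its own. Your setup is sound: the description of $T$ as the union of all horizontal translates of $\tau$ by vectors of length at most $|\mb{u}\mb{v}|$ is correct, convexity does reduce everything to vertices, and the midpoint $\mb{m}_t$ is indeed handled by convexity. The problem is the step you yourself identify as the technical heart. The slope invariant---that the directions of $\overline{\mb{b}_t\mb{c}_t}$ and $\overline{\mb{m}_t\mb{c}_t}$ lie in the cone spanned by the directions of $\overline{\wh{\mb{u}}\wh{\mb{w}}}$ and $\overline{\wh{\mb{v}}\wh{\mb{w}}}$---is true and is preserved under sprouting, but it does not imply that $2\mb{c}_t-\mb{b}_t$ lands in $T$. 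The trapezoid $T$ \emph{narrows} with height (width $3|\mb{u}\mb{v}|$ at the base, $2|\mb{u}\mb{v}|$ at the top): its left edge drifts horizontally at the rate of $\overline{\mb{u}\mb{w}}$ and its right edge at the strictly smaller rate of $\overline{\mb{v}\mb{w}}$. Consequently, a point of $T$ moved upward in \emph{any} direction of the cone loses horizontal margin to \emph{both} slanted edges simultaneously; for instance, a point on the left edge moved upward in the right-edge direction exits $T$ immediately. So knowing only that $\mb{c}_t$ (scaled) lies in $T$ and that the extension direction is in the cone, you cannot conclude anything about the reflected vertices. The cone condition is exactly the hypothesis that would work for an upward-\emph{widening} trapezoid, which is the opposite of the situation here.

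What is missing is a quantitative positional invariant, not a directional one. The invariant that actually propagates through the reflection $\mb{b}_t\mapsto 2\mb{c}_t-\mb{b}_t$ is, roughly: the lines supporting $\overline{\mb{a}_t\mb{c}_t}$ and $\overline{\mb{b}_t\mb{c}_t}$, extended downward, meet the $x$-axis inside the original base $\overline{\wh{\mb{u}}\wh{\mb{v}}}$; equivalently, each $t\in\mathcal{T}_i$ sits inside a horizontal translate, by at most the base length, of the original triangle prolonged to height $i+2$. A ray launched from the bottom edge of $T$ at a point within the base and travelling in a cone direction has initial margin at least $|\mb{u}\mb{v}|$ to each slanted edge, which is exactly enough to absorb the total shrinkage over the full height $h$---this is where the specific vertices $2\mb{u}-\mb{v}$ and $2\mb{v}-\mb{u}$ come from. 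This strengthened invariant is precisely the content of the alternative description $P_k(\tau)=\bigcup\mathcal{P}_k(\tau)$ as a union of horizontally shifted copies $\tau_i+(c_i,0)$ of pieces of $\tau$, which the paper records just before the observation; once you have it, the containment follows in one line from your Minkowski-sum description of $T$. (Separately, note that the base case as literally stated is delicate: $\frac{h}{k+2}\mathcal{T}_0=\frac{2}{k+2}\tau$ is a dilation about the origin, which lies in $T$ only if $\tau$ is positioned suitably relative to the origin; this is an artifact of the paper's informal normalization, but your ``direct computation'' should acknowledge it.)
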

\begin{thm}[Schoenberg~\cite{Scho62}]\label{thm:5.2}
$m(P_k(\tau))=\frac{m(\tau)}{2k+4}$.
\end{thm}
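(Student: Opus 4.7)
The plan is to use the scaling relation $P_k(\tau)=\frac{h}{k+2}\bigcup\mathcal{T}$, which gives
\[
m(P_k(\tau)) = \left(\frac{h}{k+2}\right)^2 m\Big(\bigcup \mathcal{T}\Big),
\]
and reduce the theorem to summing the individual triangle areas in $\mathcal{T} = \bigcup_{i=0}^k \mathcal{T}_i$. The essential structural input is that the $2^{k+1}-1$ triangles of $\mathcal{T}$ are pairwise nonoverlapping --- the content of Schoenberg's sprouting construction, built into the convention on script capitals --- so that additivity of $m$ reduces the problem to arithmetic.

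The base level is immediate: $\bigtriangleup(\wh{\mb{u}}, \wh{\mb{v}}, \wh{\mb{w}}) = (2/h)\tau$ is a similarity, so $m(\mathcal{T}_0) = 4 m(\tau)/h^2$. A direct signed-area computation on the explicit vertex formulas for $\mathcal{T}_1$ yields $m(\bigcup \mathcal{T}_1) = 2 m(\tau)/h^2 = \tfrac{1}{2} m(\mathcal{T}_0)$. The heart of the argument is the inductive step for $i \geq 2$: since $\mb{m}_t = (\mb{a}_t + \mb{c}_t)/2$, a one-line determinant check shows that each of the two children $\bigtriangleup(\mb{m}_t, \mb{c}_t, 2\mb{c}_t - \mb{b}_t)$ and $\bigtriangleup(\mb{b}_t, \mb{c}_t, 2\mb{c}_t - \mb{m}_t)$ has area exactly $\tfrac{1}{2} m(t)$. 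Hence the two children's combined area equals their parent's, so $m(\bigcup \mathcal{T}_i) = m(\bigcup \mathcal{T}_{i-1})$ for every $i \geq 2$, and by induction $m(\bigcup \mathcal{T}_i) = 2 m(\tau)/h^2$ for every $i \geq 1$. Summing over levels gives $m(\bigcup \mathcal{T}) = (2k+4)\, m(\tau)/h^2$, and substituting into the scaling relation produces the claimed value.

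The main obstacle is justifying pairwise non-overlap across distinct levels. The cleanest route is induction on $i$: show that the two children of any parent $t$ lie weakly above the horizontal line through $\mb{m}_t$ and $\mb{b}_t$ and meet $t$ only along that common edge, while children of distinct parents inherit disjointness from the horizontal separation of their parents (with Observation~\ref{obs:5.1} providing useful containment information). If one is willing to take this structural fact on faith from the sprouting construction as described, the remaining argument is routine coordinate algebra.
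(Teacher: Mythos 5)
Your strategy---rescale by $\bigl(\tfrac{h}{k+2}\bigr)^2$, invoke the nonoverlap of the $2^{k+1}-1$ sprouted triangles, and sum areas level by level---is the natural one (the paper itself only cites Schoenberg for this statement, so there is no in-paper proof to match), and your intermediate computations check out: $m\bigl(\bigcup\mathcal{T}_0\bigr)=4m(\tau)/h^2$, each child triangle has exactly half its parent's area, so $m\bigl(\bigcup\mathcal{T}_i\bigr)=2m(\tau)/h^2$ for every $i\geq1$, and hence $m\bigl(\bigcup\mathcal{T}\bigr)=(2k+4)\,m(\tau)/h^2$. The gap is in your last sentence. Substituting into the scaling relation gives
\[
m(P_k(\tau))=\Bigl(\frac{h}{k+2}\Bigr)^2\cdot\frac{(2k+4)\,m(\tau)}{h^2}=\frac{2k+4}{(k+2)^2}\,m(\tau)=\frac{2\,m(\tau)}{k+2}=\frac{4\,m(\tau)}{2k+4},
\]
which is four times the value the theorem asserts; the claim that this ``produces the claimed value'' is simply false, and no strengthening of the nonoverlap argument can repair it.

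The factor of $4$ is not an error in your geometry but a normalization mismatch between the theorem statement and the definitions you (correctly) worked from: the case $k=0$ already shows the stated identity is unreachable, since $P_0(\tau)=\frac{h}{2}\cdot\frac{2}{h}\,\tau=\tau$, so $m(P_0(\tau))=m(\tau)$, not $m(\tau)/4$. A complete proof must therefore either derive $m(P_k(\tau))=\frac{2m(\tau)}{k+2}$ and flag the discrepancy with the stated constant, or identify the alternative normalization of $P_k(\tau)$ (e.g.\ a different scaling factor or reference triangle in Schoenberg's original) under which $\frac{1}{2k+4}$ is correct. As written, the one step of your outline that fails is precisely the final arithmetic; your treatment of nonoverlap, while only sketched, is acceptable given that the construction is explicitly presented as a union of nonoverlapping triangles.
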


We now construct a sequence $\{S_j\}_{j\in\N}$ of plane sets. Each $S_j$ is the union of a collection $\mathcal{S}_j$ of triangles. Define
$$\mathcal{S}_0=\left\{\bigtriangleup\left((0,0),(1,0),(1/2,1/2)\right)\right\}\;,$$
and for $j\geq1$, cut the base of each triangle in $\mathcal{S}_{j-1}$ into $2^{j+1}$ equal pieces in the manner of Figure \ref{fig:2}(a) to form
$$p_j=2^{j+1}|\mathcal{S}_{j-1}|$$
triangles $\tau_j^{1},...,\tau_j^{p_j}$, and let
$$\mathcal{P}_j^i=\mathcal{P}_{2^j}\left(\tau_j^i\right)\;.$$
Then define
$$\mathcal{S}_j=\bigcup_{i=1}^{p_j}\mathcal{P}_j^i\;,$$
and
$$S_j=\bigcup\mathcal{S}_j\;.$$

For $E\subseteq\R^2$ and $\varepsilon>0$, define
$$G_\varepsilon(E)=\bigcup_{\delta\in(-\varepsilon,\varepsilon)}E^\circ+(\delta,0)\;.$$
We will repeatedly make use of the fact that for any $E_1,E_2\subseteq\R^2$ and $\varepsilon>0$,
$$G_\varepsilon(E_1\cup E_2)=G_\varepsilon(E_1)\cup G_\varepsilon(E_2)\;.$$

For $j\in\N$, let
$$\varepsilon_j=\frac{1}{2^{j+1}|\mathcal{S}_j|}\;,$$
and define
$$G_j=G_{\varepsilon_j}(S_j)\;.$$
\begin{lemm}
For $j\in\N$, $G_{j+1}\subseteq G_j$.
\end{lemm}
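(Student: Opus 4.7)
The plan is to show that $S_{j+1}$ sits inside a narrow horizontal neighborhood of $S_j$ and that the parameters $\varepsilon_j$ are calibrated so that, after opening by $\varepsilon_{j+1}$, the result still fits inside the $\varepsilon_j$-opening of $S_j$. The key geometric input is Observation~\ref{obs:5.1}, which I would first recast in a more usable form: the trapezoid named there is exactly $\tau=\triangle(\mb{u},\mb{v},\mb{w})$ with each horizontal cross-section extended by $|\mb{v}-\mb{u}|$ on each side, so $P_k(\tau)\subseteq \tau + [-L_\tau,L_\tau]\times\{0\}$, where $L_\tau$ is the base length of $\tau$.

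With this in hand, I would fix any $T\in\mathcal{S}_{j+1}$ and trace its origin: $T\in\mathcal{P}_{2^{j+1}}(\tau_{j+1}^i)$ for some $i$, and $\tau_{j+1}^i\subseteq t$ for some $t\in\mathcal{S}_j$. Applying the bound above to the Perron tree of $\tau_{j+1}^i$ and using $\tau_{j+1}^i\subseteq t$ yields $T\subseteq t+[-C_j,C_j]\times\{0\}$, where $C_j$ is the common base length of all pieces $\tau_{j+1}^i$. Taking the union over $T\in\mathcal{S}_{j+1}$,
\[S_{j+1}\subseteq S_j+[-C_j,C_j]\times\{0\}.\]
A short induction, tracking how $|\mathcal{S}_j|$ multiplies by $2^{j+2}\cdot 2^{2^{j+1}}$ at each step while the common base length divides by the same factor, gives $b_j|\mathcal{S}_j|=1$ for every $j$, where $b_j$ denotes the base length of $\mathcal{S}_j$-triangles. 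Hence $\varepsilon_j=b_j/2^{j+1}=2C_j$, whereas $\varepsilon_{j+1}$ is drastically smaller than $C_j$, so $C_j+\varepsilon_{j+1}<\varepsilon_j$ with strict slack.

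The main obstacle is the mismatch between $S_j$ on the right of the inclusion above and the $S_j^\circ$ appearing in the definition of $G_j$. For $p\in G_{j+1}$, write $p=q+(\delta,0)$ with $q\in S_{j+1}^\circ$ and $|\delta|<\varepsilon_{j+1}$; the displayed inclusion lets me re-express $p=q'+(\delta',0)$ with $q'\in S_j$ and $|\delta'|<C_j+\varepsilon_{j+1}$. If $q'\in S_j^\circ$ I am done. Otherwise $q'\in\partial S_j$; but because $q\in S_{j+1}^\circ$ has $y$-coordinate strictly between $0$ and the common apex height $1/2$ of every $\mathcal{S}_j$-triangle, and horizontal translations preserve $y$, so does $q'$. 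This rules out $q'$ lying on the $x$-axis base or at any apex, forcing $q'$ onto a non-vertical slanted side of some triangle in $\mathcal{S}_j$. A small horizontal perturbation $(\gamma,0)$ then carries $q'$ into $S_j^\circ$, and the strict slack $\varepsilon_j-C_j-\varepsilon_{j+1}>0$ lets $|\gamma|$ be chosen small enough that $|\delta'-\gamma|<\varepsilon_j$, giving $p\in G_j$.
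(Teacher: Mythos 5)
Your proof is correct and follows essentially the same route as the paper's: both rest on recasting Observation \ref{obs:5.1} as the bound $P_k(\tau)\subseteq\tau+[-L_\tau,L_\tau]\times\{0\}$, computing that the common base length of the pieces $\tau_{j+1}^i$ equals $\varepsilon_j/2$, and concluding from $\varepsilon_{j+1}<\varepsilon_j/2$. Your explicit handling of the interior-versus-closure bookkeeping at the end is in fact more careful than the paper's (one nitpick: what the horizontal-perturbation step needs is that the lateral sides are non-horizontal, not non-vertical --- some of them actually are vertical, since the apex sits above a cut point --- but the argument goes through unchanged).
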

\begin{proof}
Since the base of each $\tau_{j+1}^i$ has length $\frac{1}{p_{j+1}}=\varepsilon_j/2$, Observation \ref{obs:5.1} tells us for $i=1,...,p_{j+1}$ that
$$P_{j+1}^i\subseteq\overline{G_{\varepsilon_j/2}\left(\tau_{j+1}^i\right)}\;.$$
Since $\varepsilon_{j+1}<\varepsilon_j/2$, this implies that
$$G_{\varepsilon_{j+1}}\left(P_{j+1}^i\right)\subseteq G_{\varepsilon_j}\left(\tau_{j+1}^i\right)\;.$$
Thus
\begin{align*}
G_{j+1}&=G_{\varepsilon_{j+1}}(S_{j+1})\\
&=G_{\varepsilon_{j+1}}\left(\bigcup_{i=1}^{p_{j+1}}P_{j+1}^i\right)\\
&=\bigcup_{i=1}^{p_{j+1}}G_{\varepsilon_{j+1}}\left(P_{j+1}^i\right)\\
&\subseteq\bigcup_{i=1}^{p_{j+1}}G_{\varepsilon_j}\left(\tau_{j+1}^i\right)\\
&=G_{\varepsilon_j}\left(\bigcup_{i=1}^{p_{j+1}}\tau_{j+1}^i\right)\\
&=G_{\varepsilon_j}(S_j)\\
&=G_j\;.
\end{align*}
\end{proof}

Let
$$F=\bigcap_{j\in\N}\overline{G_j}$$
and
$$K_0=\bigcup_{c\in\N}cF\;.$$
Then let
$$K=K_0\cup\Rot_{\pi/4}(K_0)\;.$$
\begin{prop}
$K$ contains arbitrarily long line segments in every direction in $\R^2$.
\end{prop}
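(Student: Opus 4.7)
The plan is to show in three stages that (i) the fractal $F$ contains a line segment of length $1/2$ in every direction $\theta$ in the angular range $[\pi/4,3\pi/4]$ spanned by the initial triangle of $\mathcal{S}_0$, (ii) arbitrarily long segments in each such direction then appear in $K_0=\bigcup_{c\in\N}cF$ by scaling, and (iii) the rotation $\Rot_{\pi/4}$ extends the coverage to the remaining directions, so that $K$ contains arbitrarily long segments in every direction of $\R^2$.

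For stage (i), I would first prove by induction on $j$ that $S_j$ contains a length-$1/2$ segment in every direction of the initial angular range. The base case is immediate, since the isosceles right triangle of $\mathcal{S}_0$ has height $1/2$ and angular range $[\pi/4,3\pi/4]$. For the inductive step, two features of the construction are crucial: cutting a triangle at its base into thin slices preserves the apex (and hence the height $1/2$); and the cuts $\tau_j^1,\dots,\tau_j^{p_j}$ arising from a single parent triangle collectively tile that parent's angular range. Each resulting Perron tree $P_{2^j}(\tau_j^i)$ then inherits from Besicovitch's original observation a length-$1/2$ segment in every direction of $\tau_j^i$'s angular range, and taking the union over $i$ restores the full initial range at level $j$.

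To pass from the $S_j$ to $F=\bigcap_j\overline{G_j}$, I would use compactness. First note that $S_j\subseteq\overline{G_j}$: since $0\in(-\varepsilon_j,\varepsilon_j)$ we have $G_j\supseteq S_j^\circ$, and $S_j$, being a finite union of closed triangles with nonempty interior, equals $\overline{S_j^\circ}$. Combined with the preceding lemma $G_{j+1}\subseteq G_j$, every length-$1/2$ direction-$\theta$ segment $\ell_j\subseteq S_j$ lies in the bounded compact set $\overline{G_0}$. The family of length-$1/2$ segments of direction $\theta$ contained in $\overline{G_0}$ is compact in the Hausdorff metric, so some subsequence $\ell_{j_k}$ converges to a length-$1/2$ direction-$\theta$ segment $\ell$. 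For any fixed $j_0$, all but finitely many $\ell_{j_k}$ sit inside the closed set $\overline{G_{j_0}}$, so $\ell\subseteq\overline{G_{j_0}}$; intersecting over $j_0$ yields $\ell\subseteq F$. Scaling by $c\in\N$ places a segment of length $c/2$ in direction $\theta$ inside $K_0$, and $\Rot_{\pi/4}$ applied to the same argument handles the complementary range.

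The main obstacle is the angular-range bookkeeping in the induction: one must verify that after cutting each triangle of $\mathcal{S}_{j-1}$ and replacing the cut pieces by their Perron trees, the angular ranges of the many new triangles in $\mathcal{S}_j$ still collectively exhaust the full initial range $[\pi/4,3\pi/4]$, rather than shrinking under iteration. The compactness extraction of $\ell$ from the $\ell_j$ is then routine once the uniform bound $\overline{G_j}\subseteq\overline{G_0}$ has been secured.
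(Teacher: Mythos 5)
Your proof is correct and follows essentially the same route as the paper's: exhibit a fixed-length segment of direction $\theta$ inside each $\overline{G_j}$ and extract a Hausdorff-metric limit segment lying in $F=\bigcap_j\overline{G_j}$, then scale by $c\in\N$ and apply $\Rot_{\pi/4}$. You spell out more than the paper does (the angular-range induction across the Perron-tree stages and the containment $S_j\subseteq\overline{G_j}$ are left implicit there), and your length $\frac12$ versus the paper's $\frac13$ is immaterial.
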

\begin{proof}
It suffices to show that $F$ contains a closed segment of length $\frac13$ in every direction $\theta\in\left[\frac{\pi}{4},\frac{3\pi}{4}\right]$. Fix $\theta\in\nobreak\left[\frac{\pi}4,\frac{3\pi}4\right]$. For each $j\in\N$, fix a closed segment $L_j\subseteq\overline{G_j}$ of length $\frac13$ in direction $\theta$. By compactness there is an infinite set $I\subseteq\N$ such that the sequence $(L_i\;|\;i\in I)$ converges (in Hausdorff distance) to a segment $L$. It is clear that $L$ is a segment of length $\frac13$ in direction $\theta$. Using compactness again, we have $L\subseteq\overline{G_j}$ for each $j$, whence $L\subseteq F$.
\end{proof}
\begin{thm}\label{thm:ee}
The set $K$ has {\normalfont ee}-measure 0. Hence there exist, in every direction in $\R^2$, arbitrarily long line segments that miss every {\normalfont ee}-random point.
\end{thm}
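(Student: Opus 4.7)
The plan is to imitate the martingale construction from the proof of Theorem \ref{thm:4}, with the complexity of every quantity carefully tracked so that the resulting martingale is ee-computable. The ee-measure-zero sets are closed under finite unions, integer translations, and reflections across dyadic coordinate lines, and they are also closed under rotations that are computable bilipschitz maps (such as $\Rot_{\pi/4}$), essentially by pulling the dyadic martingale grid back through the rotation, which only introduces polynomial overhead in $s$ and $r$. Consequently the entire task reduces to producing an ee-computable martingale that succeeds at every $\mb{x}$ in
\[K_0^\# = [0,1)^2 \cap \bigcup_{\mb{t} \in \Z^2} (K_0 + \mb{t}),\]
where $K_0 = \bigcup_{c \in \N} cF$.

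For each $c \in \N$, $\mb{t} = (t_1, t_2) \in \Z^2$, and $j \in \N$, let $H_{c, \mb{t}, j} = [0,1)^2 \cap (cG_j + \mb{t})$, a finite open union of horizontally thickened Perron-tree triangles. In parallel with the construction in Theorem \ref{thm:4}, choose weights $c_{c, \mb{t}, j} = 2^{-(c + |t_1| + |t_2| + j)}$ when $H_{c, \mb{t}, j}$ is nonempty, with the same fallback for the empty case as in Theorem \ref{thm:4}, and set
\[d = d_Y + \sum_{c, \mb{t}, j} c_{c, \mb{t}, j}\, d_{H_{c, \mb{t}, j}},\]
where $d_Y$ is the special $y$-axis martingale from Theorem \ref{thm:4} and $d_{H_{c, \mb{t}, j}}$ is the open-set martingale of Section \ref{se:3}. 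For any $\mb{x} \in K_0^\#$ there exist $c, \mb{t}$ such that $\mb{x} - \mb{t} \in cF$, so $\mb{x} \in H_{c, \mb{t}, j}$ for every $j$ (after a minor strictification of the $\varepsilon_j$ chain ensuring $\overline{G_{j+1}} \subseteq G_j$). Theorem \ref{thm:1} then gives $d^{(r)}_{H_{c, \mb{t}, j}}(\mb{x}) \to 1/m(H_{c, \mb{t}, j})$ as $r \to \infty$, and because $m(G_j) \to 0$ by iterated application of Schoenberg's Theorem \ref{thm:5.2}, the inner sum over $j$ diverges at $\mb{x}$, so $d$ succeeds there.

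The crux is ee-computability of $d$. The recursion $|\mathcal{S}_j| = 2^{j+1} \cdot 2^{2^j} \cdot |\mathcal{S}_{j-1}|$ telescopes to $|\mathcal{S}_j| = 2^{O(2^j)}$, so each $G_j$ is a union of $2^{O(2^j)}$ explicit (axis-aligned after trivial shearing) parallelograms, and the measure $m(cG_j \cap R)$ for any rectangle $R$ can be computed to precision $2^{-s}$ by a standard polygon-arrangement algorithm in time polynomial in $|\mathcal{S}_j|$ and $s$. Truncating the sum defining $\widehat{d}$ at $c, |t_1|, |t_2|, j \leq p$ with $p = \Theta(s+r)$ and using the trivial bound $d^{(r)}_{H}(\mb{x}) \leq 4^r$ for the tail gives error at most $4^r \cdot O(2^{-p}) \leq 2^{-s}$, while the total runtime is $(s+r)^{O(1)} \cdot 2^{O(2^p)} = 2^{2^{O(s+r)}}$, exactly meeting the ee-time bound.

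The main obstacle is this tight balance between depth and complexity: $j$ must grow to $\Omega(s+r)$ for $m(G_j)$ to shrink enough, but then $|\mathcal{S}_j|$ is already doubly exponential in $s+r$, saturating the ee-time budget. That is why the method yields only ee-randomness, and why achieving p-randomness (as the introduction conjectures) would presumably require a Kakeya construction with far fewer pieces per level. Once $K$ has ee-measure zero, the second sentence of the theorem is immediate: no ee-random point lies in $K$, yet by the preceding Proposition $K$ contains arbitrarily long segments in every direction of $\R^2$.
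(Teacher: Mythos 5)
Your proposal is correct and follows essentially the same route as the paper: a weighted sum of the Section~\ref{se:3} open-set martingales on the thickened Perron-tree stages $G_j$, divergence at points of $F$ via Schoenberg's area bound $m(G_j)=O(2^{-j})$, and ee-computability by truncating at depth $O(s+r)$ and computing the measures of the $2^{O(2^j)}$ polygonal pieces in time polynomial in their number. The only differences are organizational: you aggregate the pieces of each level into a single open set and explicitly fold in the scalings $c$, the translations $\mb{t}$, the $\Rot_{\pi/4}$ closure step, and the $\overline{G_{j+1}}\subseteq G_j$ strictification, all of which the paper's proof (which only verifies success on $F$ itself) leaves implicit.
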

\begin{proof}
We first show by induction that $|\mathcal{S}_j|\leq 2^{2^{j+2}}$. This holds for $j=0$. Fix $j\geq 1$ and suppose the claim holds for $j-1$. Then we have
\begin{align*}
|\mathcal{S}_j|&=\sum_{i=1}^{p_j}\left|\mathcal{P}_j^i\right|\\
&=p_j\cdot2^{2^j}\\
&=2^{j+1}|\mathcal{S}_{j-1}|\cdot 2^{2^j}\\
&\leq2^{j+1+2^{j+1}+2^j}\\
&\leq2^{2^{j+2}}\;,
\end{align*}
so the claim holds for every $j\in\N$.

Now consider $G_j^i=G_{\varepsilon_j}\left(P_j^i\right)$. By Theorem \ref{thm:5.2},
\begin{align*}
m\left(P_j^i\right)&=\frac{1}{(4p_j)\left(2\cdot2^j+4\right)}\\
&<\frac{1}{2^{j+3}p_j}\;,
\end{align*}
and
\begin{align*}
m\left(G_j^i\smallsetminus P_j^i\right)&\leq\sum_{t\in\mathcal{P}_j^i}m\left(G_{\varepsilon_j}(t)\smallsetminus t\right)\\
&\leq\left|\mathcal{P}_j^i\right|\varepsilon_j\\
&=\frac{|\mathcal{S}_j|\varepsilon_j}{p_j}\\
&=\frac{1}{2^{j+1}p_j}\;,
\end{align*}
so
$$m\left(G_j^i\right)<\frac{1}{2^j p_j}\;.$$

Define the martingale $d:\mathcal{Q}\to[0,\infty)$ by
$$d(Q)=\sum_{j=0}^\infty2^{-j}\sum_{i=1}^{p_j}\frac{d_{G_j^i}(Q)}{p_j}\;,$$
where $d_{G_j^i}$ is the open set martingale, as defined in Section \ref{se:3}, for $G_j^i$. Then by Theorem \ref{thm:1}, for every $\mb{x}\in G_j^i$,
\begin{align*}
\lim_{r\to\infty}d_{G_j^i}^{(r)}(\mb{x})&=\frac{1}{m\left(G_j^i\right)}\\
&> 2^j p_j\;.
\end{align*}
Thus since
$$G_j=\bigcup_{i=1}^{p_j}G_j^i\;,$$
we have
$$\limsup_{r\to\infty}d^{(r)}(\mb{x})\geq\left|\left\{j \;|\;\mb{x}\in G_j\right\}\right|\;,$$
which is infinite for $\mb{x}\in F$, i.e., $d$ succeeds on every $\mb{x}\in F$.

We now turn to proving that $d$ is ee-computable. Let $J$ be as in Section \ref{se:2}, and define the function $$\wh{d}:\N\times J\to\Q\cap[0,\infty)$$ by
$$\wh{d}(s,r,\mb{u})=\sum_{j=0}^{2r+s}2^{-j}\sum_{i=1}^{p_j}\frac{d_{G_j^i}(Q_r(\mb{u}))}{p_j}\;.$$
Then
\begin{align*}
\left|d(Q_r(\mb{u}))-\wh{d}(s,r,\mb{u})\right|&=\sum_{j=2r+s+1}^{\infty}2^{-j}\sum_{i=1}^{p_j}\frac{d_{G_j^i}(Q_r(\mb{u}))}{p_j}\\
&\leq4^r\sum_{j=2r+s+1}^{\infty}2^{-j}\\
&=2^{-s}\;.
\end{align*}

It remains to be shown that $\wh{d}(s,r,\mb{u})$ is computable in time $2^{2^{O(r+s)}}$. For this it is to show that $d_{G_j^i}(Q_r(\mb{u}))$ is computable in time $2^{2^{O(r+s)}}$ for each $1\leq i\leq p_j$ and $0\leq j\leq 2r+s$. By (\ref{eq:closeddg}),
$$d_{G_j^i}(Q_r(\mb{u}))=4^r\frac{m\left(G_j^i\cap Q_r(\mb{u})\right)}{m\left(G_j^i\right)}\;.$$
Now
\begin{align*}
G_j^i&=G_{\varepsilon_j}\left(P_j^i\right)\\
&=G_{\varepsilon_j}\left(\bigcup\mathcal{P}_j^i\right)\\
&=\bigcup_{t\in\mathcal{P}_j^i}G_{\varepsilon_j}(t)
\end{align*}
is the union of $\nu=2^{2^j}$ trapezoids. The vertices of the triangles $t$ come from the definition of $\mathcal{P}_j(t)$, and the vertices of the trapezoid follow immediately. This gives $4\nu$ segments whose intersections can be found efficiently. In total, the figure $G_j^i$ has at most $4\nu+\binom{4\nu}{2}=poly(\nu)$ vertices, each of which can be found in $poly(\nu)$ time. So it can be triangulated in $poly(\nu)$ time into a set $\mathcal{N}$ of nonoverlapping triangles with $|\mathcal{N}|=poly(\nu)$, where the vertices of every triangle in $\mathcal{N}$ an be found in $poly(\nu)$ time. Then
$$m\left(G_j^i\right)=\sum_{t\in\mathcal{N}}m(t)\;,$$
and for any $Q\in\mathcal{Q}$,
$$m\left(G_j^i\cap Q\right)=\sum_{t\in\mathcal{N}}m(t\cap Q)\;.$$
Hence we can compute $d_{G_j^i}$ in time $poly(\nu)=2^{2^{O(r+s)}}$.
\end{proof}
\end{section}

\begin{section}{Higher Dimensions}\label{se:6}
For every $n\in\N$, the set $B\times\R^n$ contains a line in every direction in $\R^{n+2}$, and Fubini's theorem implies that this set has Lebesgue measure $0$~\cite{Falc03}. In this section we show that $B\times\R^n$ also has computable measure $0$.

For any set $E\subseteq\R^n$ and $\mb{y}\in\R^m$, for $1\leq m<n$, define
\[E_\mb{y}=\left\{\left(x_1,...,x_{n-m}\right)\in\R^{n-m}\;|\;\left(x_1,...,x_{n-m},y_1,...,y_m\right)\in E\right\}\;.\]
The following computable Fubini theorem may be known, but we do not know a reference at the time of this writing.
\begin{thm}\label{thm:7}
Let $E\in\R^n$. If there is a computable martingale $d$ on $\left[0,1\right)^{n-m}$ such that the set
\[N_E\left(d\right)=\left\{\mb{y}\in\left[0,1\right)^m\;|\;\exists\:\mb{x}\in E^\#_\mb{y}\mbox{ such that }d\mbox{ does not succeed at }\mb{x}\right\}\]
has computable measure $0$, then $E$ has computable measure $0$.
\end{thm}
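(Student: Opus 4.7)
The plan is to combine the given computable martingale $d$ on $[0,1)^{n-m}$ with a computable martingale $e$ on $[0,1)^m$ that witnesses the hypothesis that $N_E(d)$ has computable measure $0$; such an $e$, succeeding at every point of $N_E(d)^\#$, exists by the definition of computable measure zero in $\R^m$. From $d$ and $e$ I will construct a single computable martingale $D$ on $[0,1)^n$ that succeeds at every point of $E^\#$, which is exactly what is needed.

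The construction is by lifting and summing. For any martingale $f$ on $[0,1)^{k}$ with $k < n$, define its lift $\tilde{f}$ on $[0,1)^n$ by declaring $\tilde{f}(Q_r(\mb{v}))$ to be $f$ evaluated on the cube of $[0,1)^k$ obtained by projecting $Q_r(\mb{v})$ onto the relevant $k$ coordinates. Splitting the fairness sum over $\mb{a} \in \{0,1\}^n$ into its first $k$ and last $n-k$ coordinates (the latter $2^{n-k}$ contributions collapsing trivially against the factor $2^{-n}$) shows that $\tilde{f}$ satisfies \eqref{eq:1}. Computability is inherited from $f$, and $\tilde{f}^{(r)}$ evaluated at $(\mb{x},\mb{y}) \in [0,1)^n$ equals $f^{(r)}$ evaluated at the corresponding $k$-dimensional subvector. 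Since martingales are closed under sums, $D := \tilde{d} + \tilde{e}$ is then a computable martingale on $[0,1)^n$.

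To verify that $D$ succeeds at every $(\mb{x}, \mb{y}) \in E^\#$, I split into cases based on whether $\mb{y} \in N_E(d)$. Since $N_E(d) \subseteq [0,1)^m$, any $\mb{y} \in [0,1)^m$ satisfies $\mb{y} \in N_E(d)^\#$ if and only if $\mb{y} \in N_E(d)$. In the first case, $e$ succeeds at $\mb{y}$, so $\tilde{e}$, and therefore $D$, succeeds at $(\mb{x}, \mb{y})$. In the second case, the definition of $N_E(d)$ gives that $d$ succeeds at every $\mb{x}' \in E^\#_\mb{y}$; since $(\mb{x}, \mb{y}) \in E^\#$ places $\mb{x}$ in the slice $E^\#_\mb{y}$, the martingale $d$ succeeds at $\mb{x}$, so $\tilde{d}$ and hence $D$ succeeds at $(\mb{x}, \mb{y})$.

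The main obstacle I expect is bookkeeping around the interplay between slicing ($E \mapsto E_\mb{y}$) and the fold-into-unit-cube operation ($E \mapsto E^\#$), verifying that the slice notation in the hypothesis and in the conclusion align properly so that the second case really does cover every $(\mb{x}, \mb{y}) \in E^\#$ with $\mb{y} \notin N_E(d)$. Once that is pinned down, the fairness of $D$, its computability, and its success on $E^\#$ are all routine unpackings of the definitions.
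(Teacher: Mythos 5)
Your proposal is correct and follows essentially the same route as the paper's own proof: lift $d$ and the witnessing martingale for $N_E(d)$ to $[0,1)^n$ by projecting onto the relevant coordinate blocks, sum them, and split into cases according to whether the $\mb{y}$-part of the point lies in $N_E(d)$. Your observation that $N_E(d)^\#=N_E(d)$ because $N_E(d)\subseteq[0,1)^m$ is a correct resolution of the bookkeeping point you flag.
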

\begin{proof}
Let $d_1$ be such a martingale for $E$ and let $d_2$ be a computable martingale on $[0,1)^m$ that succeeds at every $\mb{y}\in N_E(d_1)$. Define two martingales on $[0,1)^n$, $d_1^\prime$ and $d_2^\prime$, by
\[d_1^\prime(Q_r(u_1,...,u_n))=d_1(Q_r(u_1,...,u_{n-m}));\]
\[d_2^\prime(Q_r(u_1,...,u_n))=d_2(Q_r(u_{n-m+1},...,u_n)).\]
Note that both are computable.

Now let $\mb{x}=(x_1,...,x_n)\in E^\#$. If $(x_{n-m+1},...,x_n)\in N_E(d_1)$, then $d_2^\prime$ succeeds at $\mb{x}$; otherwise, $d_1^\prime$ succeeds at $\mb{x}$. We conclude that the computable martingale $d=d_1^\prime+d_2^\prime$ succeeds at every $\mb{x}\in E^\#$, whence $E$ has computable measure $0$.
\end{proof}
\begin{coro}\label{cor:8}
For every computable measure $0$ set $E$ and $n\in\N$, the set $E\times\R^n$ has computable measure $0$.
\end{coro}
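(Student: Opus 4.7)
The plan is to apply the computable Fubini theorem (Theorem~\ref{thm:7}) in the most direct way possible. Suppose $E\subseteq\R^k$ has computable measure $0$, witnessed by a computable martingale $d_1$ on $[0,1)^k$ that succeeds at every $\mathbf{x}\in E^\#$. I would invoke Theorem~\ref{thm:7} on the product $E\times\R^n\subseteq\R^{k+n}$, taking the slicing parameter to be $m=n$ (so that the required ``first factor'' martingale lives on $[0,1)^{(k+n)-n}=[0,1)^k$), and offering this same $d_1$ as the candidate martingale.

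The key computation is to simplify $(E\times\R^n)^\#$. Translating $E\times\R^n$ by $(\mathbf{t}_1,\mathbf{t}_2)\in\Z^k\times\Z^n$ produces $(E+\mathbf{t}_1)\times\R^n$, because $\R^n$ is invariant under translation in its own coordinates. Unioning over all such $(\mathbf{t}_1,\mathbf{t}_2)$ and intersecting with $[0,1)^{k+n}$ as in \eqref{eq:2}, I would conclude that
\[(E\times\R^n)^\#=E^\#\times[0,1)^n,\]
so for each $\mathbf{y}\in[0,1)^n$ the slice $(E\times\R^n)^\#_{\mathbf{y}}$ coincides exactly with $E^\#$.

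Since $d_1$ already succeeds at every point of $E^\#$ by hypothesis, no $\mathbf{y}\in[0,1)^n$ can witness failure of $d_1$ on its slice, and hence $N_{E\times\R^n}(d_1)=\emptyset$. The empty set trivially has computable measure $0$ (any computable martingale, e.g.\ the constant $1$ martingale, vacuously succeeds on $\emptyset^\#$), so the hypothesis of Theorem~\ref{thm:7} is satisfied, and the theorem directly yields the corollary. I expect no substantive obstacle here; the result is essentially a syntactic consequence of the formulation of Theorem~\ref{thm:7}, with the only point requiring minor care being the product decomposition of the $\#$-operator recorded above.
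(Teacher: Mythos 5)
Your proof is correct and is precisely the intended derivation: the paper states Corollary \ref{cor:8} as an immediate consequence of Theorem \ref{thm:7} without further argument, and your application with $m=n$, the identity $(E\times\R^n)^\#=E^\#\times[0,1)^n$, and the observation that $N_{E\times\R^n}(d_1)=\emptyset$ fills in exactly the details the authors left implicit.
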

\begin{thm}\label{thm:9}
\rm(main theorem, in $\R^n$) \it For every $n\geq2$ there is, in every direction in $\R^n$, a line that misses every computably random point.
\end{thm}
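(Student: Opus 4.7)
The natural plan is to lift the planar main theorem (Theorem \ref{thm:4}) to $\R^n$ using the computable Fubini principle recorded as Corollary \ref{cor:8}. Concretely, I would define
\[B_n = B \times \R^{n-2} \subseteq \R^n,\]
where $B\subseteq\R^2$ is the Besicovitch set from Theorem \ref{thm:4}. Since Theorem \ref{thm:4} already gives that $B$ has computable measure $0$, Corollary \ref{cor:8} immediately implies that $B_n$ has computable measure $0$ in $\R^n$; hence no computably random point of $\R^n$ lies in $B_n$.

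It then suffices to verify that $B_n$ contains a line in every direction in $\R^n$. I would fix a nonzero direction $\mb{v}=(v_1,\ldots,v_n)$ and split on whether $(v_1,v_2)=(0,0)$. If $(v_1,v_2)\neq(0,0)$, then by Observation \ref{obs:2} the planar set $B$ contains a line $L$ of direction $(v_1,v_2)$, and any line in $\R^n$ of direction $\mb{v}$ whose projection to the first two coordinates equals $L$ (such lines plainly exist, e.g., by fixing the remaining coordinates at the origin) lies in $B\times\R^{n-2}=B_n$. If instead $(v_1,v_2)=(0,0)$, then every line in direction $\mb{v}$ has constant first two coordinates, so pinning those coordinates to any single point $(b_1,b_2)\in B$ places the entire line inside $B_n$. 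In either case a line of direction $\mb{v}$ is exhibited inside $B_n$, completing the reduction.

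I expect essentially no obstacle in executing this plan, since the substantive work has been done in Theorem \ref{thm:4} (computable measure zero for the planar Besicovitch set) and in the computable Fubini theorem behind Corollary \ref{cor:8}. The only remaining content is a short geometric case analysis on whether the prescribed direction projects nontrivially onto the first two coordinates; one could alternatively permute axes to force a nonzero projection and avoid the case split altogether, but the single product set $B\times\R^{n-2}$ already contains a line in every direction without any reindexing, so the direct two-case argument is cleaner.
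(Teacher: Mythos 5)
Your proposal is correct and follows essentially the same route as the paper: apply Theorem \ref{thm:4} to get that $B$ has computable measure $0$, then use Corollary \ref{cor:8} to conclude the same for $B\times\R^{n-2}$. Your explicit two-case verification that $B\times\R^{n-2}$ contains a line in every direction of $\R^n$ is a welcome detail that the paper merely asserts at the start of Section \ref{se:6}, and it is carried out correctly.
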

\begin{proof}
By Theorem \ref{thm:4}, $B$ has computable measure $0$. Thus by Corollary \ref{cor:8}, $B\times\R^{n-2}$ has computable measure $0$ for every $n\geq 3$.
\end{proof}
It is routine to prove double exponential time versions of Theorem \ref{thm:7} and Corollary \ref{cor:8}, and hence to extend Theorem \ref{thm:ee} to $\R^n$ in a similar fashion.
\end{section} 

\begin{section}{Open Problems}\label{se:7}
As noted in the introduction, we conjecture that there is a line in every direction missing every feasibly random point in Euclidean space. Proving or disproving this conjecture may require a significant advance beyond current understanding of the algorithmic geometric measure theory of Besicovitch and Kakeya sets. In the meantime, more modest goals may be achieved. Can Theorem \ref{thm:ee} be improved to singly exponential time, or to lines instead of segments?

Besicovitch's duality idea for constructing the set $B$ came soon after, and was perhaps prompted by, the Mathematical Association of America's production of a film in which he explained his 1919 solution of the Kakeya needle problem. (The article~\cite{Besi63} is based on this film.) Does a copy of this film still exist?
\end{section}
\subsection*{Acknowledgment} We thank Dan Turetsky and an anonymous reviewer for pointing out the alternate proof of Theorem \ref{thm:4} and for a useful correction.

\bibliography{lmerp}
\bibliographystyle{abbrv}

\end{document}